\documentclass[12pt,a4paper]{article}
\usepackage[latin1]{inputenc}
\usepackage{graphicx,amsmath,amsfonts,amssymb,mathrsfs,units}

%%%%%%%%%%%%%%%%%%%%%%%%%%%%%%%%%%%%%%%%%%%%%%%%%%%%%%%%%%%%%%%%%%%%%%%%%
\newcommand{\DATUM}{08-October-2010}         
\pagestyle{myheadings}                 
\markboth{\centerline{Bach+Kurig, \DATUM}}%
         {\centerline{Bach+Kurig, \DATUM}}
%%%%%%%%%%%%%%%%%%%%%%%%%%%%%%%%%%%%%%%%%%%%%%%%%%%%%%%%%%%%%%%%%%%%%%%%%

\usepackage{bbm,MnSymbol}
\usepackage{amsthm,stmaryrd}
\usepackage{enumerate}

%\usepackage[notcite,notref]{showkeys}

        %%%%%%%%%%%%%%%%%%%%%%%%%%%%%%%%%%%
          % Frequently used abbreviations   %
           %%%%%%%%%%%%%%%%%%%%%%%%%%%%%%%%%%%
 
\newcommand{\Om}{\Omega}
 
\newcommand{\Lm}{\Lambda}

\newcommand{\ul}{\underline}

%

         %%%%%%%%%%%%%%%%%%%%%%
         %                    %
        %  Boldface symbols  %
         %                    %
         %%%%%%%%%%%%%%%%%%%%%%

%

\newcommand{\cB}{\mathcal{B}}

\newcommand{\cE}{\mathcal{E}}
\newcommand{\cF}{\mathcal{F}}
\newcommand{\cG}{\mathcal{G}}

\newcommand{\cL}{\mathcal{L}}         %%%%%%%%%%%%%%%%%%%%%%%%
         %                      %
         % Calligraphic Letters %
         %                      %
\newcommand{\cP}{\mathcal{P}}         %%%%%%%%%%%%%%%%%%%%%%%%

\newcommand{\cZ}{\mathcal{Z}}
%                                       

          %%%%%%%%%%%%%%%%%%%%%%%%%%%
          % Gothic Letters          %
          %%%%%%%%%%%%%%%%%%%%%%%%%%%          

\newcommand{\sH}{\mathscr{H}}

\newcommand{\RR}{\mathbb{R}}            %%%%%%%%%%%%%%%%%%%%%%%%%%%
            %                         %
\newcommand{\NN}{\mathbb{N}}            % Blackboard Bold Letters %
\newcommand{\CC}{\mathbb{C}}            %                         %
\newcommand{\PP}{\mathbb{P}}            %%%%%%%%%%%%%%%%%%%%%%%%%%%

%

        %%%%%%%%%%%%%%%%%%%%%%%%%%%
        %                         %
        %    Overlined Letters    %
        %                         %
        %%%%%%%%%%%%%%%%%%%%%%%%%%%

%
\newcommand{\uA}{{\underline A}}
        %%%%%%%%%%%%%%%%%%%%%%%%%%%
        %                         %
        %   Underlined Letters    %
        %                         %
        %%%%%%%%%%%%%%%%%%%%%%%%%%%
\newcommand{\uU}{{\underline U}}

%

            %%%%%%%%%%%%%%%%%%%%%%%%%%%
            %                         %
            % Letters with Hat on Top %
 %                         %
 %                         %
            %%%%%%%%%%%%%%%%%%%%%%%%%%%

\newcommand{\he}{\hat{e}}

%

         %%%%%%%%%%%%%%%%%%%%%%%%%%%%%
         %                           %
         %                           %
         %                           %
         % Letters with Tilde on Top %
         %                           %
         %                           %
         %                           %
             %%%%%%%%%%%%%%%%%%%%%%%%%%%%%

\newcommand{\td}{\tilde{d}}

\newcommand{\tp}{\tilde{p}}

\newcommand{\tx}{\tilde{x}}

\newcommand{\tmu}{\tilde{\mu}}
\newcommand{\tnu}{\tilde{\nu}}

\newcommand{\tOm}{\widetilde{\Om}}
\newcommand{\tPP}{\widetilde{\mathbb{P}}}
%
             %%%%%%%%%%%%%%%%%%%%%%%%%%%%%
             %                           %
             % Letters with Arrow on Top %
             %                           %
             %%%%%%%%%%%%%%%%%%%%%%%%%%%%%

%

\newcommand{\SU}{\mathrm{SU}}

\newcommand{\rRe}{\mathop{\mathrm{Re}}}  %%%%%%%%%%%%%%%%%%%%%%%%%%%%%%
 %                            %
 % Miscelleanous Math Symbols %
               %                            %
 %%%%%%%%%%%%%%%%%%%%%%%%%%%%%%

\newcommand{\dist}{\mathop{\mathrm{dist}}}

\newcommand{\Tr}{\mathop{\mathrm{Tr}}}

\newcommand{\rk}{\mathop{\mathrm{rank}}}
\newcommand{\One}{\mathbbm{1}}
\newcommand{\dd}{\partial}
\newcommand{\inv}[1]{\frac{1}{#1}}
\newcommand{\wo}{\backslash}

\newcommand{\Z} {\mathbb{Z}}
\newcommand{\Zd}{\mathbb{Z}^d}

\newtheorem{Theo}{Theorem}[section]

\newtheorem{Lemma}[Theo]{Lemma}

%%%%%%%%%%%%%%%%%%%%%%%%%%%%%%%%%%%%%%%%%%%%%%%%%%%%%%%%%%%%%%%%%%%%%%%%%%

%\setlength\arraycolsep{2pt}
%\def\arraystretch{1.5}

\title{
The Integrated Density of States for the Wilson Dirac Operator}

\author{ 
V.~Bach\footnotemark[1], 
C.~Kurig\footnotemark[2], 
} 

\date{\DATUM}

\footnotetext[1]{v.bach@tu-bs.de}
\footnotetext[2]{kurig@mathematik.uni-mainz.de}

%%%%%%%%%%%%%%%%%%%%%%%%%%%%%%%%%%%%%%%%%%%%%%%%%%%%%%%%%%%%%%%%%%%%%%%%%%
\begin{document}

\maketitle

\begin{abstract}
  It is shown that gauge field-dependent fermion Dirac operators from
  lattice QCD form an ergodic operator family in the probabilistic
  sense, provided the gauge field is an ergodic random field. As a
  consequence, the integrated density of states of such Dirac
  operators in the thermodynamic limit exists and is almost surely
  independent of the chosen gauge field configuration.
\end{abstract}

\newpage

%%%%%%%%%%%%%%%%%%%%%%%%%%%%%%%%%%%%%%%%%%%%%%%%%%%%%%%%%%%%%%%%%%%%%

%%%%%%%%%%%%%%%%%%%%%%%%%%%%%%%%%%%%%%%%%%%
\section{Introduction}%%%%%%%%%%%%%%%%%%%%%
%%%%%%%%%%%%%%%%%%%%%%%%%%%%%%%%%%%%%%%%%%%

The Standard Model of Elementary Particles is one of the biggest
achievements of theoretical physics during the second half of the
$20^{th}$ century. It provides a common conceptual basis for all
elementary forces except gravity. The part which describes the
strong nuclear force is called Quantum Chromodynamics (QCD). The
associated Lagrangian density has a clear and simple appearance 
that fits on a single line,
\begin{equation} \label{eq-1}
\cL(x) \ = \ 
- \frac{1}{4} \, F_{\mu \nu}(x) \, F^{\mu \nu}(x) 
\: + \: \overline{\psi}(x) \big( iD(\uA) - m \big) \psi(x),
\end{equation}
with $D(\uA) := \gamma_\mu(\dd_\mu + i A_\mu)$ being the Dirac operator of the
fermion field $\overline{\psi}(x)$, $\psi(x)$, which depends on the 
gauge field $\uA := (A_\mu(x))_{\mu,x}$, and 
$F_{\mu\nu}(x)=\dd_\mu A_\nu(x)-\dd_\nu A_\mu(x)-[A_\mu(x),A_\nu(x)]$ being
the field tensor. In spite of its structural simplicity, concrete quantitative
predictions are difficult to derive from \eqref{eq-1}, and one often resorts
either to calculations in perturbation theory or numerical
simulations on a discretized (Euclidean, after Wick rotation)
space-time, known as \emph{lattice QCD} (LQCD, see
\cite{MontvayMuenster1994} for an overview).

Several basic properties of QCD, such as the spontaneous breaking
of chiral symmetry or the phenomenon of quark confinement, manifest
themselves in the regime of low energies, where perturbation theory
in the QCD coupling constant cannot be applied. Spontaneous chiral
symmetry breaking is signaled by the formation of a non-vanishing
chiral condensate $\langle\overline\psi\psi\rangle$. In a seminal
paper \cite{BanksCasher1980}, Banks and Casher formulated a link between the value
of the condensate and the spectral properties of the Dirac operator
$D$ in the deep infra-red. Since the gauge field $\uA$ does not
appear explicitly in the observable, it acts as a background field
which nonetheless determines the spectral properties in a
non-trivial way.

The idea that the distribution of the low-lying eigenvalues of the 
fermion Dirac operator is very close to the one of the corresponding 
(i.e., respecting symmetries) random matrix ensemble was put forward 
in \cite{ShuryakVerbaarschot1993,VerbaarschotZahed1993,Verbaarschot1994}
and affirmed by numerous numerical studies, e.g. 
\cite{GiustiLuescher2003, BergMarkum2001}, 
for a review see for example \cite{VerbaarschotWettig2000}.
In fact, these
distributions agree to an accuracy that would, perhaps,
allow to replace the derivation of average spectral
properties of the fermion Dirac operator by sampling gauge field
configurations with the random matrix eigenvalue distribution.
The robustness of this phenomenon over a broad range of parameter
values, like the underlying gauge group or the system's temperature
(in the Boltzmann weight) is also remarkable and encourages us to
formulate our ultimate goal as to establish a rigorous mathematical
link between these distributions beyond numerical evidence.

The present paper is a first modest step towards this goal, namely to
fit the formulation of the model in LQCD into the mathematical
framework of \emph{ergodic operator families} which has been built up
over the past three decades or so to study random Schrödinger
operators and, especially, the Anderson model. In contrast to random
Schrödinger operators, however, the randomness in LQCD models lies
on the lattice bonds - not on the lattice sites - and corresponds
to a random magnetic field, rather than an alloy or a quenched glass.

As the main result of this paper we prove that Dirac operators of LQCD
which depend on the gauge field indeed constitute ergodic operator
families in the probabilistic sense, provided the gauge field itself
is ergodic, i.e., has sufficient rapidly decaying correlations (see
Section~\ref{sec:probmeas} for a precise formulation). This, in turn, is a
fair assumption in many physical situations, e.g., at high temperature
or if the gauge field is massive (which is believed to be true for
non-abelian gauge fields). As a consequence of our result the integrated 
density of states exists in the thermodynamic limit and is almost surely
independent of the chosen gauge field configuration.

%%%%%%%%%%%%%%%%%%%
Many observables can be expressed in terms of derivatives of the QCD
partition function with respect to source terms.  For example,
the chiral condensate is given by \cite{VerbaarschotWettig2000}
\begin{equation}
\langle \overline{\psi}{\psi} \rangle 
\ = \ -\lim_{m\to 0}\lim_{V\to\infty}\frac{1}{V}\dd_m \log Z^{QCD} ,
\end{equation}
with the partition function
\begin{equation}
Z^{QCD} \ = \ 
\int \prod_{x\in V, \mu=1,\ldots,d} dA_\mu \ 
\det\big[i D(\uA) + m \big] e^{-S_{YM}(\uA)} .
\end{equation}
Here, the integration of the fermionic variables yields the fermion 
determinant $\det[i D(\uA) + m]$, and $S_{YM}(\uA)$ is the Euclidean 
Yang-Mills action.

It is customary to use the \emph{quenched approximation} in numerical simulations, 
which amounts to setting the fermion determinant is equal to one.
This reduces the numerical effort significantly 
and corresponds to the physical case of infinitely heavy sea quarks.

The discretization of the Dirac operator is also subtle, because the
naive discretization leads to the occurrence of fermion doublers,
which have no physical meaning. There are several
ways to work around this problem. Wilson proposed to add a term that
vanishes in the continuum limit and suppresses the doublers on the
lattice \cite{Wilson1975}. Another method is to introduce staggered
fermions - the lattice is divided up in sub-lattices where different
staggered phases live, that are interpreted as physical phases
\cite{KogutSusskind1975, Susskind1977}. We are mainly interested in
those two cases, where the Dirac operator still has nearest-neighbour
interaction. This is not the case for another elegant solution, the
overlap operator proposed in \cite{Neuberger1997}.

%%%%%%%%%%%%%%%%%%%%%%%%%%%%%%%%%%%%%%%%%%%%%%%%%%%%%%%%%%%%%%%%%%%%%%%%%%%%%%%
\subsection{Introduction of the Model}\label{sec:introduction}%%%%%%%%%%%%%%%%%
%%%%%%%%%%%%%%%%%%%%%%%%%%%%%%%%%%%%%%%%%%%%%%%%%%%%%%%%%%%%%%%%%%%%%%%%%%%%%%%

Now we come to the precise description of the mathematical setting.
We consider the lattice $\Zd$, $d\geq 2$, with fermion fields supported on the sites and gauge fields supported on the bonds of the lattice. 
We take only the one particle case into account. Then the matter fields are complex vectors and the configuration of all matter fields is supposed to be an element of the Hilbert space $\sH= \ell^2(\Zd,\CC^k)$
of square summable $\Zd$-sequences in $\CC^k$. 
$\sH$ is equipped with the usual scalar product
\begin{equation}
	\langle \varphi, \psi \rangle \ =\ \sum_{x\in\Zd} \sum_{i=1}^d \bar{\varphi}_i(x) \psi_i(x),\quad \varphi,\psi \in \sH.
\end{equation}

We will consider operators on $\sH$, that also depend on the configuration of gauge fields on the bonds of $\Zd$.
The set of bonds in the lattice $\Zd$ is denoted by
\begin{equation}
	\cB \ :=\ \Zd\times \{1,\ldots,d\}.
\end{equation}
The bond $(x,\mu)\in\cB$ is the one connecting $x$ and $x+\he_\mu$, with $\he_\mu$ the unit vector in $\Zd$ pointing in direction $\mu$. We give the bond $(x,\mu)$ the orientation from $x$ to $x+\he_\mu$.

The gauge fields associated to the bonds are elements of a compact Lie group $\cG$, the gauge group. We assume $\cG$ to be either $SO(N)$, $SU(N)$, or $U(N)$ to be explicit and since these are the relevant physical cases.
The gauge field on the bond $b=(x,\mu)$ is denoted by $U_{x,\mu}$ or $U_b$. 

It turns out to be necessary to consider the direction of a bond, the gauge field for going from $x+\he_\mu$ to $x$ is $U_{x+\he_\mu,-\mu}$ and we set $U_{x+\he_\mu,-\mu}=U_{x,\mu}{}^{-1}$.

A gauge field configuration is the collection $\{ U_b \}_{b\in\cB}$.
As mentioned before, this gauge field configuration is randomly generated.
To specify the underlying probability space we will need the notion of a plaquette, a collection of four bonds that form a plane square in $\Zd$,
\begin{equation}
	p(x;\mu,\nu) \ :=\ \big\{ (x,\mu), (x+\he_\mu,\nu), (x+\he_\nu,\mu), (x,\nu) \big\} ,
\end{equation}
with $\mu\neq\nu$.
We need the product of the gauge fields along a plaquette $p=p(x;\mu,\nu)$,
\begin{equation}
	U_p \ :=\ U_{x;\mu,\nu}\ :=\ 
	U_{x,\nu}^{-1}\  U_{x+\he_\nu,\mu}^{-1}\  U_{x+\he_\mu,\nu}\ U_{x,\mu},
\label{eq:defUp}
\end{equation}
where the orientation of the bonds leads to the inverse gauge fields.
Thus we have $U_{x;\mu,\nu}=U_{x;\nu,\mu}^{-1}$ and define a plaquette as positively orientated if $\mu<\nu$.

The set of all positively orientated plaquettes is denoted by
\begin{equation}
\cP \ :=\ \big\{  p(x;\mu,\nu) \ \big|\ x \in \Zd,\ \mu,\nu \in \{1,\ldots,d\big\},\ \mu<\nu  \}.
\end{equation}

%%%%%%%%%%%%%%%%%%%%%%%%%%%%%%%%%%%%%%%%%%%%%%%%%%%%%%%%%%%%%%%%%%
\subsubsection{The Probability Space}\label{sec:probspace}%%%%%%%%
%%%%%%%%%%%%%%%%%%%%%%%%%%%%%%%%%%%%%%%%%%%%%%%%%%%%%%%%%%%%%%%%%%

We start by specifying the probability space for a single bond. 
Since $\cG$ is a compact Lie group, $\cG$ is equipped with a natural measure, the Haar measure $\mu_H$.
Because of the compactness of $\cG$, the Haar measure $\mu_H$ is normalized and we obtain the probability space $(\cG,\cF_1,\mu_H)$, with $\cF_1$ being the $\sigma$-algebra of Borel sets of the topological space $\cG$.
A priori, we let the gauge field $U_b$ on a single bond be a random variable that is uniformly distributed with respect to the Haar measure $\mu_H$ of $\cG$. 

The gauge field configuration $\uU=\{ U_b \}_{b\in\cB}$ can be regarded as an element of the product space $\cG^\cB$.
$\cG^\cB$ is compact in the product topology by Tychonov's Theorem.
The probability space for gauge field configurations is now constructed by means of cylinder sets as in \cite{Billingsley1979}.
A cylinder set is subset of $\cG^\cB$ of the form
\begin{equation}
	M\ =\ \big\{ \uU \in \cG^\cB \ \big|\ U_{b_1} \in A_1, \ldots, U_{b_n} \in A_n \big\}
\end{equation}
with $A_1,\ldots,A_n \in \cF_1$ and $b_1,\ldots, b_n \in \cB$. The set of all cylinder sets is denoted $\cZ$.
We take as a $\sigma$-algebra for $\cG^\cB$ the $\sigma$-algebra $\cF$ generated by the system of all cylinder sets.
The probability-measure $\tPP$ on $\cG^\cB$ is then defined to be the product measure, setting for every cylinder set
\begin{equation}
	\tPP\big(\big\{ \uU \in \cG^\cB \ \big|\ U_{b_1} \in A_1, \ldots, U_{b_n} \in A_n \big\}\big)
	\ =\ \prod_{i=1}^{n} \mu_H( A_i ).
\end{equation}
Now, we modify the measure $\tPP$ by a weight function that represents the gauge action. 
Formally, the measure $\PP$ is defined as
\begin{equation}
 d\PP(\uU) \ =\ Z^{-1} e^{-S(\uU)}\ d\tPP(\uU)
\end{equation}
with $Z$ being a normalization factor.
We assume the gauge action to be of the following form:
\begin{equation}
	S(\uU)\ =\ \beta \sum_{p\in\cP} \rRe \Tr(\One - U_p)
\end{equation}
with $\beta>0$  and $U_p$ the plaquette variable as defined in (\ref{eq:defUp}).
This action on discrete space-time is the Wilson action used in lattice QCD calculations.
Note that the gauge action is invariant under translations. 
Denoting by $T^\ell$ the translation in $\Zd$ by $\ell \in \Zd$, i.e. $	T^\ell x := x -\ell$, and defining the translation of a gauge field by $\ell \in \Zd$ to be 
\begin{equation} \label{eq:deftrans0}
	T^\ell U_{x,\mu}\ :=\ U_{T^\ell x,\mu}\ =\ U_{x-\ell,\mu},
\end{equation}
for any $U_{x,\mu} \in \cG$ and $(x,\mu) \in \cB$, we have 
\begin{equation}
	S(\uU)\ =\ S(T^\ell\uU)
\end{equation}
for all gauge field configurations $\uU\in\cG^\cB$ and all $\ell \in \Z^d$.

In the following, we use the Gibbs formalism to establish the existence of $\PP$ and its uniqueness for small $\beta$, following \cite{Bovier2006}.
To this end, we fix a finite subset $\Lm$ of $\cB$ and a gauge field configuration $\eta \in \cG^\cB$ that represents the boundary condition, which fixes the gauge field outside of $\Lm$.
The corresponding local specifications are
\begin{equation}
 	d\mu_{\Lm,\beta}^{\eta}(\uU)\ =\ 
	\big(Z_{\Lm,\beta}^{\eta}\big)^{-1} 
	\exp \bigg[- \beta \sum_{p \cap \Lm \neq \emptyset} \rRe \Tr(\One-U_p )\bigg]\ 
	\prod_{b\in\Lm} d\mu_H(U_b)
\end{equation}
where
\begin{equation}
	Z_{\Lm,\beta}^{\eta}\ =\ 
	\int_{\cG^\Lm} \exp\bigg[- \beta \sum_{p \cap \Lm \neq \emptyset} \rRe\Tr(\One-U_p)\bigg]\ 
	\prod_{b\in\Lm} d\mu_H(U_b).
\end{equation}
It is known (see for example \cite{Simon1993, Bovier2006}) that, for sufficiently small $\beta>0$, these measures weakly converge, as $\Lm\nearrow\cB$, to a unique Gibbs measure, since $\cG^\cB$ is compact, the action is continuous in the gauge fields, and for all $b \in \cB$ there is a constant $c' < \infty$, such that
\begin{equation}
	 \sum_{p\in\cP: b \in p} \left\|  \rRe \Tr(\One - U_p) \right\|_\infty \ \leq\ c'.
\end{equation}
Furthermore, the Gibbs measure is independent of the boundary condition $\eta$.
In our case, we have 
\begin{equation}
	 \sum_{p\in\cP: b \in p} \left\|  \rRe \Tr(\One - U_p) \right\|_\infty \ \leq\ 2(d-1)\cdot 2N,
\end{equation}
uniformly in $b\in\cB$, since any bond $b\in\cB$ is element of $2(d-1)$ plaquettes.

Dobrushin's uniqueness criterion ensures the uniqueness of the Gibbs measure for translation invariant interactions, provided
\begin{equation}
	0\ < \sum_{p \in \cP: b_0 \in p} (|p|-1)  \left\| \rRe \Tr(\One - U_p) \right\|_{\infty}\ <\ \beta^{-1},
\end{equation}
for one bond (and, hence, for all bonds) $b_0\in\cB$, see for example \cite{Simon1993}.
In our case, the criterion is fulfilled for all
\begin{equation}
  0\ <\ \beta\ < \frac{1}{12 N (d-1)}.
\label{eq:critbeta}
\end{equation}
Thus the Gibbs measure $\PP$ exists and is unique, for sufficiently small $\beta>0$.

We note in passing, that the translations are measure preserving transformations with respect to $\tPP$ and $\PP$, that means for all $A \in \cF$, $\ell \in \Z^d$
\begin{equation}\label{eq:transinv}
	 \tPP(T^\ell A)=\tPP(A)  \quad \text{and} \quad \PP(T^\ell A)= \PP(A).
\end{equation}
Put differently, $\PP$ and $\tPP$ are stationary w.r.t. the group $\Zd$ of translations.

%%%%%%%%%%%%%%%%%%%%%%%%%%%%%%%%%%%%%%%%%%%%%%%%%%%%%%%%%%
\subsubsection{Ergodic Probability Measures}\label{sec:probmeas}%%%%%%%%%%%%%%
%%%%%%%%%%%%%%%%%%%%%%%%%%%%%%%%%%%%%%%%%%%%%%%%%%%%%%%%%%

A stationary probability measure $P$ is called \emph{ergodic} iff, for all $A,A' \in \cF$,
\begin{equation} \label{eq:deferg}
 	\frac{1}{(2L+1)^d} \sum_{l\in \Zd,\ \|l\|_{\infty}\leq L} P(A \cap T^l A') \to P(A)P(A'),\quad \text{as} L\to \infty ,
\end{equation}
with $\|l\|_{\infty}=\max\big\{ |l_1|,\ldots,|l_d| \big\}$.

A random variable $f:(\cG^\cB,\cF)\to(\RR,B)$ is called \emph{invariant} iff $f(T^\ell \uU)=f(\uU)$, for all $\ell \in \Zd$ and almost all $\uU \in \cG^\cB$. The importance of the notion of ergodicity lies in the fact that any invariant random variable is $P$-almost surely constant.

In the following, we show that the measure $\PP$ is ergodic, provided (\ref{eq:critbeta}) holds true.
First, we specify the decay of correlations. 
We define a metric $d: \cB \times \cB \to \RR^+_0$ on $\cB$ by setting $d(b,b):=0$ and
\begin{eqnarray}	
	\lefteqn{
	d(b,b')\ :=
	}\\\nonumber&&\hspace{-5mm}
	\min\big\{ n \ |\ \exists \{p_1,\ldots,p_n\} \subset \cP :\ 
	b\in p_1,\ p_1 \cap p_2 \neq \emptyset,\ldots, p_{n-1} \cap p_n \neq \emptyset,\ b' \in p_n  \big\},
\end{eqnarray}
for $b,b'\in\cB$, $b\neq b'$. I.e. $d(b,b')$ is the minimal number of plaquettes to connect $b$ and $b'$. 

If $p=(x;\mu,\nu)$, $\tp=(\tx;\tmu,\tnu)$, and $p\cap\tp\neq\emptyset$ then $\|x-\tx\|_\infty\leq 1$.
Therefore, the minimal number of plaquettes connecting $b=(x,\mu) \in \cB$ and $b'=(y,\nu)\in\cB$ is at least $\|x-y\|_\infty$.
Observing that, for all $x\in\Z^d$, $\mu,\nu,\tau\in\{1,\ldots,d\}$, $\nu\neq\mu\neq\tau$,
\begin{eqnarray}
	p(x;\mu,\nu) \cap p(x+\he_\mu;\mu,\nu) &\neq& \emptyset \nonumber\\
	p(x;\mu,\nu) \cap p(x-\he_\mu;\mu,\nu) &\neq& \emptyset \nonumber\\
	p(x;\mu,\nu) \cap p(x;\mu,\tau) &\neq& \emptyset	,
\end{eqnarray}
we obtain, with $b=(x,\mu)$, $b'=(y,\nu)\in\cB$ as above, that
\begin{equation}
	\|x-y\|_\infty\ \leq\ d(b,b')\ \leq\ \|x-y\|_1 + d,
\end{equation}
with $\|x-y\|_1=\sum_{i=1}^d |x_i-y_i|$.\\
A modified version of the metric $d$ called $\td:=\ln(\frac{c}{\beta})d$ is obtained by multiplying $d$ with $\ln(\frac{c}{\beta})$, where $c>\beta$.

Now, we take two cylinder sets $A,A'\in\cF$ and show that condition (\ref{eq:deferg}) is fulfilled for $A$ and $A'$. 
Since the sigma-algebra $\cF$ is generated by $\cZ$, (\ref{eq:deferg}) extends to all $\cF$.
There are two finite sets $\Lm_A,\Lm_{A'}\subset\cB$ and $A_b,A'_{b'}\in \cF_1$ for all $b\in\Lm_A$, $b'\in\Lm_{A'}$ such that 
\begin{equation}
	A=\bigtimes_{b\in\Lm_A} A_b \times \cG^{\cB\wo\Lm_A},\ \ 	A'=\bigtimes_{b'\in\Lm_{A'}} A'_{b'} \times \cG^{\cB\wo\Lm_{A'}}
\end{equation}
Now choose $\chi_{A,\epsilon}, \chi_{A',\epsilon} \in C(\cG^\cB,[0,1])$ such that $\chi_{A,\epsilon}, \chi_{A',\epsilon}$ depend only on the variables $U_b$ with $b \in\Lm_A$ or $b\in\Lm_{A'}$, respectively, $\chi_{A,\epsilon}(\uU)=1$, $\chi_{A',\epsilon}(\uU')=1$, for all $\uU \in A$, $\uU' \in A'$, and 
\begin{equation}
	\PP\big\{ \uU \in A^C \ | \ \chi_{A,\epsilon}(\uU)>0 \big\} < \epsilon\ , \quad	\PP\big\{ \uU' \in A'^C \ | \ \chi_{A',\epsilon}(\uU')>0 \big\} < \epsilon.
\end{equation}
Then $\chi_{A,\epsilon}, \chi_{A',\epsilon}$ are continuous functions that differ from the characteristic function of $A$, $A'$ only on an set of measure less than $\epsilon$.

Now, we use a result of \cite{Foellmer1982} summarized in \cite{Simon1993}. It states that, if $\gamma<1$, where $\gamma$ is a constant depending on the interaction, one gets for any two bonds $i,j\in\cB$ that
\begin{eqnarray}
	 \lefteqn{\left|\int_{\cG^\cB} \chi_{A,\epsilon}\ \chi_{A',\epsilon}\ d\PP - 
	 \int_{\cG^\cB} \chi_{A,\epsilon}\ d\PP\ \int_{\cG^\cB} \chi_{A',\epsilon}\ d\PP\ \right| }\nonumber\\
	 &	\leq & \frac{1}{4}\ e^{-\td(i,j)} (1-\gamma)^{-1}\ \Delta_{i}(\chi_{A,\epsilon})\ \Delta_{j}(\chi_{A',\epsilon}),
\label{eq:expdecay1}
\end{eqnarray}
where
\begin{equation}
  \Delta_{j}(f):=\sum_{i\in\cB} e^{\td(i,j)} \sup\big\{\ |f(\uU)-f(\uU')| \ \big|\ U_b=U'_b,\ b \neq i\ \big\}
\end{equation}
and 
\begin{equation}
	\gamma=\sup_j \sum_{i\in\cB, i\neq j} e^{\td(i,j)} \rho_{ij}.
\end{equation}
In our case $\rho_{ij}$ for $i\neq j$, can be estimated as
\begin{equation}
	\rho_{ij} \leq \sum_{p\in\cP: {i,j}\in p} \left\| \beta \rRe \Tr(\One-U_p) \right\| \leq  2N\beta\ \One[d(i,j)=1] ,
\end{equation}
such that we get $\gamma \leq 3 \cdot 2(d-1) 2N\beta \cdot \frac{c}{\beta}$ ,and $\gamma<1$ corresponds to
\begin{equation}
	c \ <\ \frac{1}{12N(d-1)},
\end{equation}
arriving at Condition (\ref{eq:critbeta}).
We denote the \emph{distance} of $A$ and $A'$ by
\begin{equation}
	\dist(A,A'):= \min\{d(i,j) | i \in \Lm_A, j \in \Lm_{A'} \}
\end{equation}
and the \emph{diameter} of $A$ by
\begin{equation}
	D_A := \max \{ d(i,j) | i,j \in \Lm_A \}.
\end{equation}
If $i\in\Lm_A$ then
\begin{equation}
	\Delta_i(\chi_{A,\epsilon}) \ \leq \ \sum_{k\in\cB}  e^{\td(i,k)} \One[k\in\Lm_A] \ \leq \ |\Lm_A| \big(\frac{c}{\beta} \big)^{D_A}
\end{equation}
and analogously $	\Delta_j(\chi_{A',\epsilon}) \leq |\Lm_{A'}| (\frac{c}{\beta} )^{D_{A'}}$, provided $j\in\Lm_{A'}$.
Inserting this into Equation (\ref{eq:expdecay1}), and choosing $i\in\Lm_A$ and $j \in\Lm_{A'}$ such that $d(i,j)=\dist(A,A')$, we estimate
\begin{eqnarray}
	\lefteqn{\left|\int_{\cG^\cB} \chi_{A,\epsilon}\ \chi_{A',\epsilon}\ d\PP - 
	 \int_{\cG^\cB} \chi_{A,\epsilon}\ d\PP\ \int_{\cG^\cB} \chi_{A',\epsilon}\ d\PP\ \right| }\nonumber\\
	&\leq & \frac{1}{4} \bigg(\frac{\beta}{c}\bigg)^{\dist(A,A')-(D_A+D_{A'})} \frac{|\Lm_A|\ |\Lm_{A'}|}{1-12N(d-1)c},
\end{eqnarray}
for all $0<\beta<c<(12N(d-1))^{-1}$. In the limit $\epsilon\to 0$, we obtain
\begin{equation}
	\left| \PP(A\cup A') - \PP(A)\PP(A') \right|\ \leq\ C_{A,A'} \bigg(\frac{\beta}{c}\bigg)^{\dist(A,A')} ,
\end{equation}
where
\begin{equation}
	C_{A,A'}\ =\ \frac{1}{4}\ \frac{1}{1-12N(d-1)c} |\Lm_A| |\Lm_{A'}| \bigg(\frac{c}{\beta}\bigg)^{D_A+D_{A'}}
\end{equation}
is independent of the distance of $A$ and $A'$.

The exponential decay of correlations implies at once,
\begin{eqnarray}
	\lefteqn{	\frac{1}{(2L+1)^d} \sum_{\ell\in \Zd,\ \|\ell\|_{\infty}\leq L} \left|\ \PP(A \cap T^\ell A') - \PP(A)\ \PP(A')\ \right|}		\nonumber\\
	&\leq& \frac{1}{(2L+1)^d} \sum_{\ell\in \Zd,\ \|\ell\|_{\infty}\leq L} C_{A,A'} \bigg(\frac{\beta}{c}\bigg)^{\dist(A,T^\ell A')}	\nonumber\\
	&\leq& \frac{1}{(2L+1)^d} \sum_{m=0}^{L} 2d (2m+1)^{d-1} C_{A,A'} \bigg(\frac{\beta}{c}\bigg)^{m-\dist(A,A')-2(D_A + D_{A'})}	\nonumber\\
	&\leq&	\frac{2dC_{A,A'}}{2L+1} \bigg(\frac{\beta}{c}\bigg)^{-\dist(A,A')-2(D_A+D_{A'})} \frac{1}{1-\nicefrac{\beta}{c}} \stackrel{L\to\infty}{\longrightarrow} 0  ,
\end{eqnarray}
where we used that $d(k,T^\ell k)\leq d(i,k) + d(i,j) + d(T^\ell k,j)$ and thus
\begin{eqnarray}
	\dist(A,T^\ell A')&= &\min\{d(i,j) | i \in \Lm_A, j \in T^\ell \Lm_{A'} \}  \nonumber\\
	&\geq& \min\{ d(k,T^\ell k) | k \in \Lm_{A'} \} 
			 - \max\{ d(i,k) | i \in \Lm_A, k \in \Lm_{A'} \} \nonumber\\
		&&	 - \max\{ d(T^\ell k,j) | k \in \Lm_{A'} ,  j \in T^\ell \Lm_{A'} \}\nonumber\\
	&\geq& \|\ell\|_\infty - D_A - 2 D_{A'} - \dist(A,A').
\end{eqnarray}

The validity of (\ref{eq:deferg}) on cylinder sets extends to the sigma-algebra $\cF$ generated by the cylinder sets $\cZ$ by a monotone class argument.
Thus Condition (\ref{eq:critbeta}) for the uniqueness of $\PP$ ensures its ergodicity, too.

%%%%%%%%%%%%%%%%%%%%%%%%%%%%%%%%%%%%%%%%%%%%%%%%%%%%%%%%%%%%%%%%%%%%%%%%%%%
\subsubsection{Ergodic Families of Wilson Dirac Operators}%%%%%%%%%%%%%%%%%
%%%%%%%%%%%%%%%%%%%%%%%%%%%%%%%%%%%%%%%%%%%%%%%%%%%%%%%%%%%%%%%%%%%%%%%%%%%
We specify the considered operators. The dependence of those operators $D_\uU$ on the gauge field configuration is emphasized by the index $\uU$. We consider the corresponding family of operators $\{D_\uU\}_{\uU \in \cG^\cB}$.

Let $\{D_\uU\}_{\uU \in \cG^\cB}$ be a family of bounded, self-adjoint operators on $\sH=\ell^2(\Z^d;\CC^k)$. We call this family \emph{stationary} if it depends on the gauge field configuration $\uU \in \cG^\cB$ in such a way that translations act transitively, i.e.,
\begin{equation} \label{eq:defop}
 \tau^\ell D_\uU \tau^{-\ell} = D_{T^\ell \uU}
\end{equation}
for all $\uU \in \cG^\cB$ and $\ell \in \Zd$, where $\tau^\ell$ denotes the corresponding translation on $\sH$, i.e.,
\begin{equation} \label{eq:deftrans1}
	[\tau^\ell \phi](x)= \phi(x-\ell),
\end{equation}
for any $\phi \in \sH$, $x\in \Zd$.
A stationary family $\{D_\uU\}_{\uU \in \cG^\cB}$ is called \emph{ergodic} if the underlying probability measure $\PP$ on $\cG^\cB$ is stationary and ergodic.
The crucial fact about ergodic families $\{D_\uU\}_{\uU \in \cG^\cB}$ is the independence of their spectra on $\uU$, $\PP$-almost surely. \cite{Pastur1980}.

We assume that $D_\uU$ includes only nearest-neighbour interaction, i.e., for $\phi \in \sH,\ x \in \Z^d$, $[D_\uU \phi](x)$ depends only on the values of $\phi(x)$, $\phi(y)$ for those $y$ with $|x-y|=1$ and the gauge fields $U_{x,\mu}$, $U_{x,-\mu}$ for $\mu \in \{1,\ldots,d\}$, where we use the notation $U_{x,-\mu}:=U_{x-\he_\mu,\mu}$.

There are various examples for such operators of physical interest. As mentioned in the introduction, we are mainly interested in the Wilson Dirac operator and the staggered fermions operator. For simplicity, we concentrate our attention to the Wilson Dirac operator in this paper. 

In lattice gauge theories the Wilson Dirac operator $D$ is used \cite{MontvayMuenster1994}, which is a discretized version of the QCD-Dirac operator $D_{QCD}= \gamma_\mu (\dd_\mu + i A_\mu) + m $ with gauge fields $A_\mu$. 
The corresponding matter fields are defined on the hypercubic lattice $\Z^4$ and are assumed to have a Dirac structure labeled by Dirac indices $\alpha\in\{1,2,3,4\}$, as well as a colour structure with labels $c\in\{1,\ldots,N_c\}$.
The Dirac structure is represented by the $4\times4$ Euclidean Dirac matrices $\{\gamma_\mu\}_{\mu=1,\ldots,4}$. A customary explicit representation is
\begin{equation}
	\gamma_{1,2,3}= \bigg(
\begin{array}{c c}
	0 & -i\sigma_{1,2,3}\\
	i \sigma_{1,2,3} & 0 \\
\end{array}	\bigg)
\quad ,\quad
	\gamma_4= \bigg(
\begin{array}{c c}
	0 & \One\\
	\One & 0 \\
\end{array}	\bigg)
\end{equation}
with $\sigma_{1,2,3}$ being the Pauli matrices.
The Dirac matrices form a Clifford-Algebra since they fulfill $\{\gamma_\mu,\gamma_\nu\}=2\delta_{\mu \nu}$.
Introducing $\gamma_5 := \gamma_1 \gamma_2 \gamma_3 \gamma_4$, i.e.,
\begin{equation}
 \gamma_5=\bigg(
\begin{array}{c c}
	\One & 0\\
	0 & -\One \\
\end{array}	\bigg),
\end{equation}
we observe that $\{ \gamma_\mu, \gamma_5\} = 0$.

The gauge group is $\cG=SU(N_c)$ and acting on the colour structure.
Therefore, $k=4\cdot N_c$, and
$ \phi = \{\phi_{\alpha,c}\}_{ \alpha=1,\ldots,4 , \ c=1\ldots,N_c } \in \ell^2(\Z^4,\CC^k)$.
The Wilson Dirac operator $D$ is defined by
\begin{eqnarray} \label{eq:defDirac}
	\lefteqn{[D \phi]_{\alpha,c}(x) 
	\ :=\ \sum_{\beta=1}^{4} \bigg\{ (\gamma_5)_{\alpha,\beta} \phi_{\beta,c}(x) }\nonumber\\
	&&-\kappa \sum_{\mu=1}^{4} \sum_{\sigma=\pm1} 
		 \sum_{f=1}^{N_c} \big( (r \gamma_5)_{\alpha,\beta} - \sigma (\gamma_5 \gamma_\mu)_{\alpha,\beta}\big)\ (U_{x,\sigma\mu})_{c,f}\ \phi_{\beta,f}(x+\sigma\he_\mu) \bigg\} ,
\end{eqnarray}
in short,
\begin{equation} \label{eq:defDirac2}
	[D \phi](x) = \gamma_5\big[ \phi(x) - \kappa \sum_{\mu=1}^{4} \sum_{\sigma=\pm1} (r-\sigma\gamma_\mu) U_{x,\sigma\mu}\ \phi(x+\sigma\he_\mu)\big]\ . 
\end{equation}
The parameter $r\in (0,1]$ is the Wilson parameter and $\kappa>0$ the hopping parameter. 

Displaying the dependence of $D$ on the gauge field configuration $\uU$ by writing $D_\uU$, we observe that $D_\uU$ fulfills condition~(\ref{eq:defop}) for any $\uU \in \SU(N_c)^\cB$ and any $\phi \in \sH, x \in \Zd$,
\begin{align}
	[\tau^\ell D_\uU \tau^{-\ell} \phi](x)
	&= [D_\uU \tau^{-\ell} \phi](x-\ell) \nonumber \\
	&= [\tau^{-\ell} \phi](x-\ell) - \kappa \sum_{\mu=1}^{4} \sum_{\sigma=\pm1} 
		 (r+\gamma_{\sigma\mu}) U_{x-\ell,\sigma\mu}\ [\tau^{-\ell} \phi](x-\ell+\sigma\he_\mu) \nonumber \\	
	&= \phi(x) -\kappa \sum_{\mu=1}^{4} \sum_{\sigma=\pm1} (r+\gamma_{\sigma\mu}) U_{x-\ell,\sigma\mu}\ \phi(x+\sigma\he_\mu) \nonumber \\	
	&= [D_{T^\ell \uU} \phi](x)
\end{align}
and hence we have 
\begin{equation}
	\tau^\ell D_\uU \tau^{-\ell} = D_{T^\ell \uU},
\end{equation}
for all $\ell \in \Zd$ and $\uU \in SU(N_c)^\cB$.
Thus, if $\PP$ is ergodic, so is $\{ D_\uU \}_{\uU \in SU(N_c)^\cB}$, and its spectrum is $\PP$-almost surely constant.

%%%%%%%%%%%%%%%%%%%%%%%%%%%%%%%%%%%%%%%%%%%%%%%%%%%%%%%%%%%%%%%%%%%%%%%%
\subsection{Introduction of the Integrated Density of States}%%%%%%%%%%%
%%%%%%%%%%%%%%%%%%%%%%%%%%%%%%%%%%%%%%%%%%%%%%%%%%%%%%%%%%%%%%%%%%%%%%%%
In the following we study the \emph{integrated density of states} of $\{ D_\uU \}_{\uU \in \cG^\cB}$, which represents the number of eigenstates per unit volume. For the precise definition of the integrated density of states, we restrict our analysis to a finite subset $\Lm \subset \Z^d$. 
Besides, this also allows us to relate our analysis to numerical simulation.

The \emph{boundary} of $\Lm$, denoted $\dd\Lm$, is defined as
\begin{equation}
	 \dd \Lm\ :=\ \big\{ y \in \Lm \ | \ \exists\ x \in \Lm^C,\ |x-y|=1\big\}\ \subseteq\ \Lm.
\end{equation}
Furthermore we use the canonical orthonormal basis $\cE_\sH:=\{ \hat{s}^{(x,i)}\}_{x \in \Zd, i \in \{1,\ldots,k\} }$ of $\sH$ where the $\Zd$-sequence $\hat{s}^{(x,i)}$ is set to be
\begin{equation}\label{def:ONB}
	\hat{s}^{(x,i)}(y):=
	\begin{cases}
		\ \hat{e}_i , &  x = y \\	\ 0,	&  x \neq y 
	\end{cases}
\end{equation}
with $\hat{e}_i \in \CC^k$ the unit vector in  direction $i$.

Since $D_\uU$ contains only nearest-neighbour hopping, the value of $D_\uU \hat{s}^{(x,i)}$ does not change, if we replace $D_\uU$ by a restriction of $D_\uU$ to $\Lm$, for any point $x$ in $\Lm \wo \dd\Lm$. Only the boundary $\dd\Lm$ needs further specification. We present two customary choices for this, namely, Dirichlet boundary conditions and periodic boundary conditions.

%%%%%%%%%%%%%%%%%%%%%%%%%%%%%%%%%%%%%%%%%%%%%%%%%%%%%%%
\subsubsection{Dirichlet Boundary Conditions}%%%%%%%%%%
%%%%%%%%%%%%%%%%%%%%%%%%%%%%%%%%%%%%%%%%%%%%%%%%%%%%%%%
First, we restrict the operators to the finite subset $\Lm \subset \Z^d$ by means of the projection
\begin{equation}
	P^{(dir)}_\Lm: \sH \to \sH^{(dir)}_\Lm,\quad 
	\big[ P^{(dir)}_\Lm \varphi \big](x):=
	\begin{cases} \	\varphi(x), & x\in \Lm, \\	\ 0, 	& x \notin \Lm, \end{cases} 
\end{equation}
with 
\begin{equation}
	\sH^{(dir)}_\Lm\ =\ \ell^2(\Lm,\CC^k) \subset \sH
\end{equation}
being the Hilbert space of sequences vanishing outside $\Lm$.
Note that
\begin{equation}
	P_{T^\ell \Lm} = \tau^\ell P_\Lm \tau^{-\ell} .
\end{equation}
Then we define 
\begin{equation}
D^{(dir)}_{\uU,\Lm} = P^{(dir)}_\Lm D_\uU P^{(dir)}_\Lm: \sH^{(dir)}_\Lm \to \sH^{(dir)}_\Lm.
\label{eq:dirbc}
\end{equation}

Note that, since $\sH^{(dir)}_\Lm$ is finite-dimensional, $D^{(dir)}_{\uU,\Lm}$ can be represented by a matrix of size $(k |\Lm|) \times (k |\Lm|)$, where $|\Lm|$ denotes the number of elements in $\Lm$. Since $D_\uU$ is self-adjoint, so is $D^{(dir)}_{\uU,\Lm}$.
The number of eigenvalues of $D^{(dir)}_{\uU,\Lm}$ smaller than some $E\in\RR$, counting multiplicity, is denoted by  
\begin{equation}\label{def:nev}
	N^{(dir)}_{\Lm,\uU}(E) \ := \ \Tr \big\{ \One[D^{(dir)}_{\uU,\Lm}\ < E] \big\}. 
\end{equation}
The integrated density of states of $D^{(dir)}_{\uU,\Lm}$ is defined as the number of eigenvalues smaller than $E$ per unit volume,
\begin{equation}\label{def:ids}
\rho^{(dir)}_{\Lm,\uU}(E)\ := \ \frac{1}{|\Lm|} N^{(dir)}_{\Lm,\uU}(E).
\end{equation} 
Clearly, $N^{(dir)}_{\Lm,\uU}(E)$ depends only on the gauge fields on the bonds connecting points in $\Lm$.
Note that probabilistic statements about $N^{(dir)}_{T^\ell \Lm,\uU}(E)$ do not depend on $\ell \in \Zd$, since $\PP$ and $D_\uU$ are stationary.

%%%%%%%%%%%%%%%%%%%%%%%%%%%%%%%%%%%%%%%%%%%%%%%%%%%%%%
\subsubsection{Periodic Boundary Conditions}%%%%%%%%%%
%%%%%%%%%%%%%%%%%%%%%%%%%%%%%%%%%%%%%%%%%%%%%%%%%%%%%%
Another way to restrict $D_\uU$ to a finite set $\Lm\subset\Zd$ is to require periodic boundary conditions, which is often used in numerical simulations. 
In order to define periodic boundary conditions we assume $\Lm$ to be a cube of side length $L$. Without loss of generality we may assume that $\Lm=\{1,2,\ldots,L\}^d$.

We define the Wilson Dirac operator on $\ell^2(\Lm^{(per)};\CC^k)$ with periodic boun\-dary conditions by
\begin{equation}
	\big[ D_\uU^{(per)} \phi \big] (x)\ :=\ \gamma_5 \big[ \phi(x)- \kappa \sum_{\mu=1}^{4} \sum_{\sigma=\pm 1}  (r-\sigma \gamma_\mu) U_{x,\sigma\mu}\ \phi(x+\sigma\he_\mu)\big]\ ,
\end{equation}
where $\Lm^{(per)} := (\Z/L\Z)^d$ and $x+\sigma\he_\mu$ is determined only modulo multiples of $L$ in all directions. Similarly
\begin{equation}
	U_{x,-\mu} = U_{x-\he_\mu,\mu}^{-1},
\end{equation}
where $x-\he_\mu$ is also defined modulo $L$.
Thus $D_\uU^{(per)} : \sH_\Lm^{(per)} \to \sH_\Lm^{(per)}$, with
\begin{equation}
	\sH_\Lm^{(per)}\ :=\ \ell^2(\Lm^{(per)};\CC^k),
\end{equation}
only depends on the values of \uU for bonds $b \in \Lm\times \{1,\ldots,d\}$, i.e., on
\begin{equation}
	\uU_\Lm := \{ U_{x,\mu} \}_{x\in\Lm,\mu=1,\ldots,d}.
\end{equation}
Since $\sH_\Lm^{(per)}$ is finite-dimensional, we can transcribe the definition of the integrated density of states to periodic boundary conditions. 
We set $N_{\Lm,\uU}^{(per)}(E)$ to be the number of eigenvalues, counting multiplicity, of $D_{\uU,\Lm}^{(per)}$ smaller than $E \in \RR$ and define the integrated density of states in the periodic case as
\begin{equation}
	\rho_{\uU,\Lm}^{(per)}(E):=\inv{|\Lm|} N^{(per)}_{\Lm,\uU}(E).
\end{equation}

%\newpage

%%%%%%%%%%%%%%%%%%%%%%%%%%%%%%%%%%%%%
\section{Main Theorem}%%%%%%%%%%%%%%%
%%%%%%%%%%%%%%%%%%%%%%%%%%%%%%%%%%%%%

Our aim is the definition of the integrated density of states for $\{D_\uU\}_{\uU \in \Om}$.
A natural way is to let $\Lm$ be a cube of side length $L$ and investigate the case $L \to \infty$, the thermodynamic limit. As it turns out, the boundary conditions imposed is immaterial.

\begin{Theo}\label{th:mainids}
Let $(\cG^\cB,\cF,\PP)$ be the probability space defined in Section \ref{sec:introduction} and choose $0<\beta<\frac{1}{12N(d-1)}$ (such that $\PP$ is ergodic). Let $\{D_\uU\}_{\uU \in \cG^\cB}$ be the family of Wilson Dirac operators on $\sH$. 
Let $({\Om}_n)_{n\in\NN}$ be a sequence in $\Z^d$ of nested cubes, $\Om_n\subseteq\Om_{n+1}$, with $\Om_n\nearrow\Z^d$.
\begin{enumerate}
	\item 
Then the limits
\begin{equation}\label{eq:ergdir}
	\rho^{(dir)}_\uU(E)\ :=\ \lim_{n \to \infty} \frac{1}{|\Om_n|} N^{(dir)}_{\Om_n,\uU}(E)
\end{equation} 
and 
\begin{equation}\label{eq:ergper}
	\rho^{(per)}_\uU(E)\ :=\ \lim_{n \to \infty} \frac{1}{|\Om_n|} N^{(per)}_{\Om_n,\uU}(E)
\end{equation}
exist for all $E \in \RR$, $\PP$-almost surely, and are independent of the sequence $({\Om}_n)_{n\in\NN}$.
\item
Furthermore, for all $E \in \RR$, the \ul{integrated density of states} $\rho(E)$, defined by
\begin{equation}\label{eq:ids}
	\rho^{(dir)}_\uU(E) = \rho^{(per)}_\uU(E) =: \rho(E),
\end{equation}
is independent of the chosen boundary condition and of $\uU \in \cG^\cB$, $\PP$-almost surely.
\end{enumerate}
\end{Theo}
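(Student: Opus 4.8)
The plan is to establish the Dirichlet statement \eqref{eq:ergdir} first, to identify the limit with a deterministic constant $\rho(E)$, and then to obtain \eqref{eq:ergper} and \eqref{eq:ids} by comparing the periodic and Dirichlet restrictions. Since the Wilson Dirac operator is not semibounded, there is no Dirichlet--Neumann monotonicity available, and I would replace it throughout by the elementary \emph{rank inequality}: if $A,B$ are self-adjoint on the same finite-dimensional space with $\rk(A-B)\le m$, then $\big|\Tr\One[A<E]-\Tr\One[B<E]\big|\le m$ for all $E\in\RR$. The first step is an \emph{almost-additivity} lemma: if a finite $\Lm\subset\Zd$ is cut into disjoint pieces $\Lm=\Lm_1\sqcup\dots\sqcup\Lm_M$, then $D^{(dir)}_{\uU,\Lm}$ and $\bigoplus_j D^{(dir)}_{\uU,\Lm_j}$ differ only in the nearest-neighbour hopping terms joining points in different pieces, each such bond contributing a perturbation of rank at most $2k$, so the rank inequality gives
\begin{equation*}
 \Big|\,N^{(dir)}_{\Lm,\uU}(E)-\sum_{j=1}^{M}N^{(dir)}_{\Lm_j,\uU}(E)\,\Big|\ \le\ 2k\,b(\Lm;\Lm_1,\dots,\Lm_M),
\end{equation*}
where $b(\Lm;\Lm_1,\dots,\Lm_M)$ is the number of bonds of $\Zd$ inside $\Lm$ with endpoints in two different pieces.

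Next I would fix $E$ and a small cube $Q_\ell:=\{1,\dots,\ell\}^d$, and for each shift $s\in\{0,\dots,\ell-1\}^d$ tile $\Om_n$ by the translates $Q_\ell+x$, $x\in s+\ell\Zd$, contained in $\Om_n$, together with a boundary frame $R_n(s)$. For nested cubes $\Om_n\nearrow\Zd$ and fixed $\ell$, the number of inter-piece bonds is $O(|\Om_n|/\ell)$ and $|R_n(s)|=o(|\Om_n|)$, uniformly in $s$; combining the almost-additivity lemma with $|N^{(dir)}_{R_n(s),\uU}(E)|\le k|R_n(s)|$ and stationarity in the form $N^{(dir)}_{Q_\ell+x,\uU}(E)=N^{(dir)}_{Q_\ell,T^{x}\uU}(E)$ yields a constant $C=C(k,d)$ with $\big|\tfrac{1}{|\Om_n|}N^{(dir)}_{\Om_n,\uU}(E)-\tfrac{1}{|\Om_n|}\sum_j N^{(dir)}_{Q_\ell+x^{(s)}_j,\uU}(E)\big|\le C/\ell+o_n(1)$, uniformly in $s$. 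Averaging over the $\ell^d$ shifts, and using that each $x$ with $Q_\ell+x\subseteq\Om_n$ occurs for exactly one $s$, turns the right-hand sum into $\ell^{-d}|\Om_n|^{-1}\sum_{x\in\Om_n}N^{(dir)}_{Q_\ell,T^{x}\uU}(E)+o_n(1)$, which by the multidimensional Birkhoff ergodic theorem for the ergodic $\Zd$-action $\{T^{x}\}$ on $(\cG^\cB,\cF,\PP)$ converges $\PP$-a.s.\ to $\ell^{-d}\EE[N^{(dir)}_{Q_\ell,\uU}(E)]$. Hence, $\PP$-a.s.\ and for every $\ell\in\NN$,
\begin{equation*}
 \limsup_{n\to\infty}\Big|\,\frac{1}{|\Om_n|}N^{(dir)}_{\Om_n,\uU}(E)-\frac{1}{\ell^{d}}\EE\big[N^{(dir)}_{Q_\ell,\uU}(E)\big]\,\Big|\ \le\ \frac{C}{\ell};
\end{equation*}
in particular the deterministic sequence $\ell^{-d}\EE[N^{(dir)}_{Q_\ell,\uU}(E)]$ is Cauchy, and, calling its limit $\rho(E)$ and intersecting the countably many exceptional null sets over $\ell\in\NN$, one gets $\tfrac{1}{|\Om_n|}N^{(dir)}_{\Om_n,\uU}(E)\to\rho(E)$ $\PP$-a.s., with $\rho(E)$ independent of $\uU$ and of the sequence $(\Om_n)$.

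For the periodic case I would observe that $D^{(per)}_{\uU,\Om_n}$ differs from $D^{(dir)}_{\uU,\Om_n}$ only by the hopping terms wrapping around opposite faces of the cube, so $\rk\big(D^{(per)}_{\uU,\Om_n}-D^{(dir)}_{\uU,\Om_n}\big)=O(|\dd\Om_n|)$ and the rank inequality forces $\tfrac{1}{|\Om_n|}\big|N^{(per)}_{\Om_n,\uU}(E)-N^{(dir)}_{\Om_n,\uU}(E)\big|\to0$; this gives \eqref{eq:ergper} with the same limit, hence \eqref{eq:ids}. To pass from ``for each $E$, $\PP$-a.s.'' to ``$\PP$-a.s., for all $E\in\RR$'', I would use that $\rho$ is non-decreasing, being a pointwise limit of the non-decreasing functions $E\mapsto\ell^{-d}\EE[N^{(dir)}_{Q_\ell,\uU}(E)]$, so it has an at most countable and \emph{deterministic} discontinuity set $\cD\subset\RR$; running the previous two steps for every point of the countable set $\QQ\cup\cD$ produces a single null set off which convergence holds for all those $E$ (both for Dirichlet and periodic), and for the remaining $E$ one squeezes between nearby rationals using the monotonicity of the counting functions in $E$ and the continuity of $\rho$ at $E$.

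The step I expect to be the real obstacle is the tiling-plus-ergodic-theorem argument. Because the operator is not semibounded, everything must be controlled through the quantitative almost-additivity estimate with its surface-order errors rather than through a monotone bracketing; and, since ergodicity of $\PP$ under $\Zd$ need not descend to the sublattice $\ell\Zd$ generated by one tiling, the averaging over the $\ell^d$ shifts $s$ --- which replaces the sublattice average by a genuine $\Zd$-ergodic average --- is the essential device. The remaining ingredients (the rank bounds, the $O(|\Om_n|/\ell)$ bond count and the $o(|\Om_n|)$ boundary-frame bookkeeping, and the final monotone-function argument) are routine.
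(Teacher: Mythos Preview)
Your proof is correct and shares its two foundational ingredients with the paper --- the rank inequality and the resulting almost-additivity estimate (the paper's Lemma~\ref{lm:eigenvalues} and Lemma~\ref{lm:idsdiff}) --- but the global architecture is genuinely different. The paper first fixes a special dyadic sequence $\Lm_n=\{-l_0 2^{n-1}+1,\dots,l_0 2^{n-1}\}^d$, shows that $(\,|\Lm_n|^{-1}N[\Lm_n]\,)_n$ is Cauchy by splitting $\Lm_m$ into translates of a fixed $\Lm_{n_0}$ and invoking Birkhoff for the \emph{sublattice} action $x\mapsto T^{2^{n_0}l_0 x}$ (only convergence is needed, not ergodicity of the sublattice action), and then compares an arbitrary nested cube sequence $(\Om_k)$ to $(\Lm_k)$ by a second tiling argument; the $\uU$-independence of the limit is obtained a posteriori from translation invariance of $\rho_\uU(E)$. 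You instead treat the general sequence $(\Om_n)$ in one stroke: tile $\Om_n$ by copies of a small cube $Q_\ell$, average over the $\ell^d$ shifts to convert the sublattice average into a genuine $\Zd$-ergodic average, and read off directly that the limit is the deterministic number $\rho(E)=\lim_\ell \ell^{-d}\EE[N^{(dir)}_{Q_\ell,\uU}(E)]$. Your route is closer to the standard textbook argument for the IDS of random Schr\"odinger operators; its advantages are that it sidesteps the issue of sublattice ergodicity cleanly, identifies the limit explicitly, and yields sequence- and $\uU$-independence simultaneously rather than in separate steps. You also supply the upgrade from ``for each $E$, $\PP$-a.s.'' to ``$\PP$-a.s., for all $E$'' via monotonicity and the countable discontinuity set of $\rho$, which the paper does not spell out.
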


%%%%%%%%%%%%%%%%%%%%%%%%%%%%
\section{Proof}%%%%%%%%%%%%%
%%%%%%%%%%%%%%%%%%%%%%%%%%%%

%%%%%%%%%%%%%%%%%%%%%%%%%%%%%%%%%%%%%%%%%%%%%%%%%%%%%%%%%%%%%%%%%%
\subsection{An Estimate on Eigenvalues}\label{seq:evestimate}%%%%%
%%%%%%%%%%%%%%%%%%%%%%%%%%%%%%%%%%%%%%%%%%%%%%%%%%%%%%%%%%%%%%%%%%

Suppose, we take two disjoint sets $\Om_1,\Om_2\subset\Zd$ and an operator $D_\uU$, that fulfills the requirements of Theorem \ref{th:mainids}.
We can restrict $D_\uU$ to $\Om_1$, $\Om_2$ and $\Om_1 \cup \Om_2$ as in (\ref{eq:dirbc}) by means of the projections $P_{\Om_{1}}$, $P_{\Om_{2}}$ and $P_{\Om_{1}\cup\Om_{2}}$. 
Then we can determine the number of eigenvalues below some $E\in \RR$ for all three restrictions, denoted by $N_{\uU,\Om_1}(E)$, $N_{\uU,\Om_2}(E)$ and $N_{\uU,\Om_1 \cup \Om_2}(E)$, respectively.
If $\dist(\Om_1,\Om_2)\geq 2$, we know that 
$N_{\uU,\Om_1 \cup \Om_2}(E) = N_{\uU,\Om_1}(E) + N_{\uU,\Om_2}(E)$, 
since the operator $D_\uU$ links only neighbouring sites.
Our first goal is the derivation of an upper bound on the difference of $N_{\uU,\Om_1}(E) + N_{\uU,\Om_2}(E)$ and $N_{\uU, \Om_1 \cup \Om_2}(E)$.

To this end, we start with a general observation for finite matrices.

Let $A,B$ be complex, self-adjoint $(M \times M)$-matrices. The rank of $B$ is denoted by $b$, and the interesting case is $b \ll M$.
Since $A$ and $B$ are self-adjoint, so is $A+B$, and all three matrices $A,B$ and $A+B$ have $M$ real eigenvalues counting multiplicity.
Due to the fact that $\rk(B)=b$, $B$ has $(M-b)$ eigenvalues equal to zero, and $b$ eigenvalues different from zero. 
Furthermore, we denote by $N_A \in \NN_0$ the number of negative eigenvalues of $A$, by $N_B$, $N_{-B}$, $N_{A+B}$ the number of negative eigenvalues of $B$, $-B$, and $A+B$, respectively.

\begin{Lemma}\label{lm:eigenvalues}
Let A, B  be self-adjoint $M\times M$-matrices.
	The difference of the number $N_A$ of negative eigenvalues of $A$ and the number $N_{A+B}$ of negative eigenvalues of $A+B$ is at most $\rk(B)$, 
	\begin{equation}\label{eq:ev}
	|N_A-N_{A+B}| \leq \rk(B). 
	\end{equation}
\end{Lemma}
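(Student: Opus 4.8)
The plan is to derive the estimate from the variational (min--max) characterisation of the number of negative eigenvalues of a self-adjoint matrix, combined with a dimension count; no spectral perturbation machinery is needed.

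First I would record the following elementary fact: for a self-adjoint $M\times M$ matrix $C$, the number $N_C$ of its strictly negative eigenvalues equals the largest dimension of a subspace $W\subseteq\CC^M$ on which the associated quadratic form is negative definite, i.e.\ $\langle x,Cx\rangle<0$ for every $x\in W\setminus\{0\}$. This follows by diagonalising $C$: the span of the eigenvectors belonging to negative eigenvalues is such a subspace of dimension $N_C$, and any $W$ with $\Dim W>N_C$ must meet the span of the eigenvectors belonging to non-negative eigenvalues, on which the form is $\geq 0$.

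Next, writing $b:=\rk(B)$ and $K:=\ker B$ (so that $\Dim K=M-b$), I would use that $B$ being self-adjoint gives $(A+B)x=Ax$, hence $\langle x,(A+B)x\rangle=\langle x,Ax\rangle$, for all $x\in K$. Choosing a subspace $W$ with $\Dim W=N_{A+B}$ on which $A+B$ is negative definite, the dimension formula gives $\Dim(W\cap K)\geq\Dim W-b=N_{A+B}-b$, and on $W\cap K\subseteq W$ the quadratic form of $A$ coincides with that of $A+B$ and is therefore negative definite there; hence $N_A\geq N_{A+B}-b$. Applying this same inequality with $(A,B)$ replaced by $(A+B,-B)$ — legitimate since $\rk(-B)=\rk(B)=b$ — yields $N_{A+B}\geq N_A-b$. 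Combining the two bounds gives $|N_A-N_{A+B}|\leq b=\rk(B)$.

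I do not expect a genuine obstacle here: the only ingredient beyond bookkeeping is the variational characterisation of $N_C$, which is routine. An alternative route is to decompose $B$ into $\rk(B)$ self-adjoint rank-one matrices via the spectral theorem and iterate the classical rank-one eigenvalue interlacing theorem, each step moving the negative-eigenvalue count by at most one; the argument above is preferable since it avoids the induction. The same method in fact delivers the sharper two-sided bound $-N_{-B}\leq N_{A+B}-N_A\leq N_B$, if in the dimension count one replaces $\ker B$ by the range of the spectral projection of $B$ onto its non-negative (respectively non-positive) part, but this refinement is not needed for Theorem~\ref{th:mainids}.
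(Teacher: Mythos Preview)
Your proof is correct and follows essentially the same route as the paper's: both use the variational characterisation of $N_C$ as the maximal dimension of a subspace on which the quadratic form is negative definite, intersect such a subspace with $\ker B$, apply the dimension formula to obtain $N_A\geq N_{A+B}-\rk(B)$, and then swap $(A,B)\mapsto(A+B,-B)$ for the reverse inequality. The only cosmetic difference is that the paper phrases the first half as a proof by contradiction (assuming $N_{A+B}>N_A+\rk(B)$ and deriving $N_A\geq N_A+1$), whereas you argue directly; your formulation is slightly cleaner and also correctly identifies the swap as $B\mapsto -B$.
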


Note, that the bound (\ref{eq:ev}) is independent of $\left\|B \right\|$.

\begin{proof}
First, we show that $N_{A+B}-N_A\leq\rk(B)$.
Let us assume that $N_{A+B}>N_A+\rk(B)$.
Then the min-max principle ensures the existence of a subspace $X\subseteq\CC^M$, with dimension $\dim(X)=N_A+\rk(B)+1$ such that
\begin{equation}
	\sup_{\phi\in X, \|\phi\|=1} \langle \phi | (A+B) \phi \rangle < 0.
\end{equation} 
In particular we have 
\begin{equation}
	\sup_{\phi\in X \cap \ker(B), \|\phi\|=1} \langle \phi | (A+B) \phi \rangle 
	= \sup_{\phi\in X \cap \ker(B), \|\phi\|=1} \langle \phi | A \phi \rangle  < 0.
\end{equation} 
Using the min-max principle again, we obtain 
$N_A \geq \dim(X\cap\ker(B)) \geq \dim(X) - \rk(B)= N_A +1$. Therefore we have that $N_{A+B} - N_A \leq \rk(B)$.

Now, we set $A':=A+B$, $B':= B$ and get analogously $N_{A'+B'} - N_A' \leq \rk(B')$ that is $N_{A} - N_{A+B} \leq \rk(B)$.
\end{proof}

\begin{Lemma}\label{lm:idsdiff}
Let $(\cG^\cB,\cF,\PP)$ be the probability space defined in section \ref{sec:probspace} and choose $0<\beta<\frac{1}{12N(d-1)}$ such that $\PP$ is ergodic. Let $\{D_\uU\}_{\uU \in \cG^\cB}$ be the family of Wilson Dirac operators on $\sH$. 
Furthermore let $\Om_1,\ldots, \Om_J \subset \Z^d$ be disjoint, finite sets and $\Om:= \bigcup_{j=1}^{J} \Om_j$ their union.
\begin{enumerate}
\item
Then we have, for any $\uU \in \cG^\cB$ and all $E \in \RR$,
\begin{equation}\label{eq:idsdiffsim}
	\inv{|\Om|} \left| N^{(dir)}_{\uU,\Om}(E) - \sum_{j=1}^{J} N^{(dir)}_{\uU,\Om_{j}}(E)\right|\ \leq\ 
	k\ \frac{\sum_{j=1}^{J} |\dd\Om_j|}{ |\Om|}.
\end{equation}
\item 
If in addition the sets $\Om_1,\ldots,\Om_J$ are cubes, such that $\Om$ is also a cube, then 
\begin{equation}\label{eq:idsdiffper}
	\inv{|\Om|} \left| N^{(per)}_{\uU,\Om}(E) - \sum_{j=1}^{J} N^{(per)}_{\uU,\Om_{j}}(E) \right|\ \leq\ 
	3 k\ \frac{\sum_{j=1}^{J} |\dd\Om_j|}{ |\Om|},
\end{equation}
for any $\uU \in \cG^\cB$ and all $E \in \RR$.
\end{enumerate}
\end{Lemma}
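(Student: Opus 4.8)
The plan is to reduce both parts to Lemma \ref{lm:eigenvalues} by writing the block-diagonal (direct sum) operator and the full restricted operator as $A$ and $A+B$ for a suitable low-rank perturbation $B$, and then to count exactly how many matrix entries are affected by passing between the two.

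For part (i): fix $E\in\RR$ and set $A := \bigoplus_{j=1}^J \big(D^{(dir)}_{\uU,\Om_j} - E\big)$ acting on $\bigoplus_j \sH^{(dir)}_{\Om_j} = \sH^{(dir)}_\Om$, and $A+B := D^{(dir)}_{\uU,\Om} - E$ on the same space. Then $N_A = \sum_j N^{(dir)}_{\uU,\Om_j}(E)$ and $N_{A+B} = N^{(dir)}_{\uU,\Om}(E)$, so by Lemma \ref{lm:eigenvalues} the left-hand side of \eqref{eq:idsdiffsim} is bounded by $\rk(B)/|\Om|$. The key step is to estimate $\rk(B)$. Since $D_\uU$ is nearest-neighbour, the operators $D^{(dir)}_{\uU,\Om}$ and $\bigoplus_j D^{(dir)}_{\uU,\Om_j}$ agree on every pair of sites $(x,y)$ with $|x-y|\le 1$ that either lie in the same $\Om_j$ or both lie outside; they can only differ on pairs $(x,y)$ with $|x-y|=1$ where $x$ and $y$ lie in different pieces $\Om_{j}$, $\Om_{j'}$. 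Every such $x$ must be a boundary site of the $\Om_j$ containing it, so $B$ is supported (as a $k|\Om|\times k|\Om|$ matrix) on the rows and columns indexed by $\bigcup_j (\dd\Om_j \times\{1,\dots,k\})$, whence $\rk(B)\le k\sum_j|\dd\Om_j|$. This gives \eqref{eq:idsdiffsim}.

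For part (ii) the scheme is the same, but one must also account for the "wrap-around" bonds that the periodic restriction $D^{(per)}_{\uU,\Om}$ contains and the separate periodic restrictions $D^{(per)}_{\uU,\Om_j}$ do not (and vice versa). One compares $\bigoplus_j D^{(per)}_{\uU,\Om_j}$ with $D^{(per)}_{\uU,\Om}$ through an intermediate Dirichlet-type object if convenient, or directly: the affected rows/columns are now those indexed by boundary sites of the $\Om_j$ together with boundary sites of $\Om$ arising from the periodic identification on $\Om$. Since $\Om$ and each $\Om_j$ are cubes with $\Om$ their disjoint union, every site contributing to a periodic wrap-around bond of $\Om$ lies on $\dd\Om\subseteq\bigcup_j\dd\Om_j$, and each interface/boundary site gets counted at most a bounded number of times (the factor $3$ absorbing: internal interfaces between the $\Om_j$, the periodic bonds internal to each $\Om_j$ that must be removed, and the periodic bonds of $\Om$ that must be added). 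Thus $\rk(B)\le 3k\sum_j|\dd\Om_j|$, giving \eqref{eq:idsdiffper}.

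The main obstacle is the bookkeeping in part (ii): one has to argue carefully that every bond along which the block-diagonal periodic operator and the full periodic operator disagree touches a site lying in some $\dd\Om_j$, and that the multiplicity with which a given boundary site is charged never exceeds $3$. The $d\ge 2$, cube, and disjoint-union hypotheses are exactly what make this counting clean: a wrap-around bond of $\Om$ in direction $\mu$ connects opposite faces of the cube $\Om$, both of which are boundary faces, and internal interfaces between two adjacent $\Om_j$ are shared boundary faces, so no site is overcounted beyond the constant. Once $\rk(B)$ is bounded in each case, the inequalities \eqref{eq:idsdiffsim} and \eqref{eq:idsdiffper} follow immediately from Lemma \ref{lm:eigenvalues}, uniformly in $\uU$ and $E$.
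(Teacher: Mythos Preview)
Your proposal is correct and follows essentially the same approach as the paper: part~(i) is identical (reduce to $E=0$, apply Lemma~\ref{lm:eigenvalues} to the off-diagonal block $B=\sum_{j\neq l}P_{\Om_j}D^{(dir)}_{\uU,\Om}P_{\Om_l}$, and bound $\rk(B)\le k\sum_j|\dd\Om_j|$). For part~(ii) the paper takes exactly the Dirichlet-intermediary route you mention as an option, applying Lemma~\ref{lm:eigenvalues} three times via the triangle inequality $|N^{(per)}_\Om-N^{(dir)}_\Om|+|N^{(dir)}_\Om-\sum_j N^{(dir)}_{\Om_j}|+\sum_j|N^{(dir)}_{\Om_j}-N^{(per)}_{\Om_j}|$, each term bounded by $k\sum_j|\dd\Om_j|$, which is precisely where the factor $3$ arises.
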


\begin{proof}
We remark that it is enough to prove the case $E=0$, because we can replace $D_\uU$ by $D_\uU-E$.
We start with the case of Dirichlet boundary conditions and define $(k|\Om| \times k|\Om|)$-matrices $A$ and $C$ by
\begin{equation}\label{def:matrixA}
	A := D^{(dir)}_{\uU,\Om},\ C:=\sum_{j=1}^J P_{\Om_j} A P_{\Om_j} \ \ :\ \sH^{(dir)}_\Om \to \sH^{(dir)}_\Om.
\end{equation}
The matrices are chosen in such a way that we get with counting multiplicity
\begin{equation}
	N^{(dir)}_{\uU,\Om}(0)= \Tr \big\{ \One[A \geq 0] \big\} 
\end{equation}
and
\begin{equation}
	\sum_{j=1}^{J} N^{(dir)}_{\uU,\Om_{j}}(0)=\Tr \big\{ \One[C \geq 0] \big\}.
\end{equation}
Now, we set the matrix $B:=A-C$ to be the difference of $A$ and $C$. 
The rank of $B$ can be estimated as follows
\begin{equation}
	\rk(B) \leq k \sum_{j=1}^J |\dd\Om_j|,
\end{equation}
using that $B=\sum_{j \neq l} P_{\Om_j} A P_{\Om_l}$.
By Lemma~\ref{lm:eigenvalues}, we obtain
\begin{equation}
	\inv{|\Om|} \left| N_{\uU,\Om}(0) - \sum_{j=1}^{J} N_{\uU,\Om_{j}}(0) \right| 
	\leq k\ \frac{\sum_{j=1}^{J} |\dd\Om_j| }{ |\Om|}\ ,
\end{equation} 
and \textit{(i)} is proven.

To prove \textit{(ii)}, we use that, for any cube $\Lm$, we have that 
\begin{equation}
	\rk\big[ D^{(per)}_{\uU,\Lm} - D^{(dir)}_{\uU,\Lm}  \big]\ \leq\ k \ |\dd\Lm|
\end{equation}
Therefore, \textit{(i)} and another application of Lemma~\ref{lm:eigenvalues} yield (\ref{eq:idsdiffper}),
\begin{eqnarray}
	\lefteqn{\inv{|\Om|} \left| N^{(per)}_{\uU,\Om}(E) - \sum_{j=1}^{J} N^{(per)}_{\uU,\Om_{j}}(E) \right|
	 \leq  \inv{|\Om|} \Big( \left| N^{(per)}_{\uU,\Om}(E) - N^{(dir)}_{\uU,\Om}(E) \right| }\nonumber\\
	& +&  \left| N^{(dir)}_{\uU,\Om}(E) - \sum_{j=1}^{J} N^{(dir)}_{\uU,\Om_{j}}(E) \right|
		+  \left| \sum_{j=1}^{J} (N^{(dir)}_{\uU,\Om_{j}}(E) - N^{(per)}_{\uU,\Om_{j}}(E)) \right| \Big) \nonumber\\
	& \leq & 3 k\ \frac{\sum_{j=1}^{J} |\dd\Om_j|}{ |\Om|}.
\end{eqnarray}

\end{proof}

Lemma~\ref{lm:idsdiff} is an estimate on the change of the integrated density of states as the subset of $\Zd$ is broken up into smaller pieces. The estimate is, indeed, precise enough to prove the existence of a limit in the sense of Theorem~\ref{th:mainids} as is done in the next sections.

%%%%%%%%%%%%%%%%%%%%%%%%%%%%%%%%%%%%%%%%%%%%%%%%%%%%%%%%%%%%%%%%%%%%%%%%%%%%%%%%%%%%%%%%%%%%
\subsection{Existence of the Integrated Density of States for a Special Sequence}%%%%%%%%%%%
%%%%%%%%%%%%%%%%%%%%%%%%%%%%%%%%%%%%%%%%%%%%%%%%%%%%%%%%%%%%%%%%%%%%%%%%%%%%%%%%%%%%%%%%%%%%

In this section it is shown that a limit for the integrated density of states exists almost surely for a sequence of growing cubes in $\Zd$. 

To this end, we define the following sequence of growing cubes,
\begin{equation}\label{def:seq}
	\Lm_n := \{ - l_0 2^{n-1} +1, \ldots, l_0 2^{n-1} \}^d \ ,
\end{equation}
with $l_0 \in \NN$ to be fixed later. Note that $\Lm_n$ has side length $l_0 2^n$.

The virtue of the sequence $(\Lm_n)_{n \in \NN}$ is, that $\Lm_{n+1}$ splits into $2^d$ disjoint cubes, each of size $|\Lm_{n}|$, in a natural way.
More precisely, there are $z_1,\ldots,z_{2^d} \in \Z^d$ such that, for 
\begin{equation}\label{def:sets}
	\Pi_n:=\{T^{z_1}, \ldots, T^{z_{2^d}} \}
\end{equation}
being the set of associated translations,
\begin{equation}\label{def:split}
	\Lm_{n+1}=\bigcup_{T \in \Pi_n} T \Lm_n.
\end{equation}

In order to clarify the notation, we also introduce the sets $\Pi^l_n$ for $l>n$ that consist of the translations needed to compose $\Lm_l$ of translations of $\Lm_n$,
\begin{equation}
 \Pi^l_n:= \big\{ T_n \ldots T_{l-1} : T_n \in \Pi_n, \ldots, T_{l-1} \in \Pi_{l-1} \big\}.
\end{equation}
Thus, $\Pi_n=\Pi^{n+1}_n$ and we have
\begin{equation}\label{def:split2}
	\Lm_{l}=\bigcup_{T \in \Pi^l_n} T \Lm_n,
\end{equation}
see Figure~\ref{fig:lnlm}.
\begin{figure}
	\begin{center}
		\includegraphics[width=0.50\textwidth]{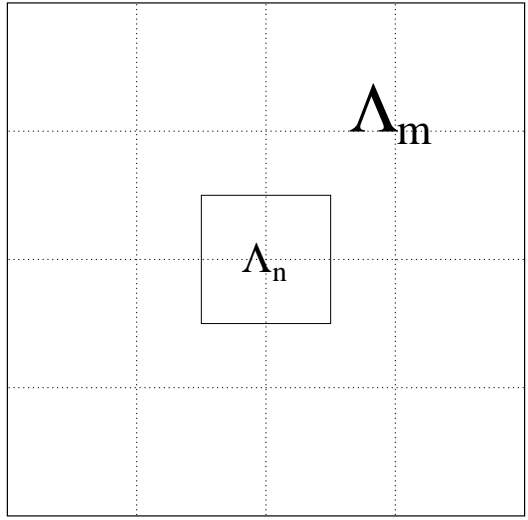}
	\end{center}
	\caption{$\Lm_m$ and $\Lm_n$ in $\Z^2$, with $m=n+2$. 
					 The dotted lines indicate the translates of $\Lm_n$ whose union gives $\Lm_m$.}
	\label{fig:lnlm}
\end{figure}
Next, we study the integrated density of states of $\Lm_{n}$, as $n$ grows. 
We omit the dependence of 
$\displaystyle N^{(dir)}_{\Lm_{n},\uU}(E)$ and 
$\displaystyle N^{(per)}_{\Lm_{n},\uU}(E)$ on $E$ and the gauge field configuration $\uU$ and write 
\begin{equation}
	N^{(dir)}[\Lm_n]\ :=\ N^{(dir)}_{\Lm_{n},\uU}(E) 
	\quad\text{and}\quad 
	N^{(per)}[\Lm_n]\ :=\ N^{(per)}_{\Lm_{n},\uU}(E)
\end{equation}
instead.

\begin{Lemma}\label{lm:aux1}
For any $l_0 \in \NN$, the sequences 
$\big( \inv{|\Lm_n|}  N^{(dir)}[\Lm_n] \big)_{n\in \NN}$ and \newline
$\big( \inv{|\Lm_n|}  N^{(per)}[\Lm_n] \big)_{n\in \NN}$ converge, $\PP$-almost surely.
\begin{equation}
	\lim_{n\to\infty} \inv{|\Lm_n|}  N^{(dir)}[\Lm_n] \to \rho^{(dir)}_{l_0}, \qquad
	\lim_{n\to\infty} \inv{|\Lm_n|}  N^{(per)}[\Lm_n] \to \rho^{(per)}_{l_0}.
\end{equation}
Furthermore
\begin{equation}\label{eq:pereq1}
	\lim_{n \to \infty} \inv{|\Lm_n|} \left| N^{(dir)}[\Lm_n] - N^{(per)}[\Lm_n] \right| = 0 ,
\end{equation}
$\PP$-almost surely, and $\rho_{l_0}:=\rho^{(dir)}_{l_0}=\rho^{(per)}_{l_0}$.
\end{Lemma}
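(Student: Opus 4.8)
The plan is to combine the ``almost-additivity'' furnished by Lemma~\ref{lm:idsdiff} along the dyadic tiling $\Lm_{l}=\bigcup_{T\in\Pi_n^l}T\Lm_n$ of (\ref{def:split2}) with a pointwise ergodic theorem. Fix $E\in\RR$ and abbreviate $g_n(\uU):=\inv{|\Lm_n|}N^{(dir)}[\Lm_n]$; this is a measurable function with $0\le g_n\le k$ that depends on $\uU$ only through the finitely many bonds inside $\Lm_n$. Since the tiling of $\Lm_l$ uses $|\Pi_n^l|=|\Lm_l|/|\Lm_n|=2^{d(l-n)}$ disjoint translates of $\Lm_n$, each with $|\dd(T\Lm_n)|=|\dd\Lm_n|\le 2d(l_0 2^n)^{d-1}$, Lemma~\ref{lm:idsdiff}(i) with $\Om=\Lm_l$ and $\Om_j$ running through the cells $T\Lm_n$ yields, for all $l>n$,
$$\Bigl|\,g_l(\uU)-\inv{|\Pi_n^l|}\sum_{T\in\Pi_n^l}g_n(T^{-1}\uU)\,\Bigr|\ \le\ \delta_n,\qquad \delta_n:=k\,\frac{|\dd\Lm_n|}{|\Lm_n|}\ \le\ \frac{2dk}{l_0\,2^n}\,,$$
where I use the stationarity identity $N^{(dir)}_{\uU,T\Lm_n}(E)=N^{(dir)}_{T^{-1}\uU,\Lm_n}(E)$ (immediate from (\ref{eq:defop}), $P_{T^\ell\Lm}=\tau^\ell P_\Lm\tau^{-\ell}$, and unitarity of $\tau^\ell$). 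Note $\sum_n\delta_n<\infty$; taking expectations in the estimate with $l=n+1$ and using $\PP$-stationarity gives $|\EE[g_{n+1}]-\EE[g_n]|\le\delta_n$, so $(\EE[g_n])_n$ is Cauchy with a limit $\rho^{(dir)}_{l_0}\in[0,k]$.

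Next I would identify the averaging operation on the right of the displayed estimate as a genuine box average. A short computation (unwinding the definitions of $\Pi_n$ and $\Pi_n^l$ and pulling out an $l$-independent translation $\uU\mapsto\uV$) shows that, with $S^j:=T^{l_0 2^n\,j}$ ($j\in\Z^d$) denoting the translations of the finite-index sublattice $l_0 2^n\Z^d$, one has
$$\inv{|\Pi_n^l|}\sum_{T\in\Pi_n^l}g_n(T^{-1}\uU)\ =\ \inv{|C_l|}\sum_{j\in C_l}g_n(S^j\uV)\,,$$
with $C_l\subset\Z^d$ an increasing, exhausting sequence of cubes of side $2^{l-n}$. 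The action $j\mapsto S^j$ is the restriction of the translation $\Z^d$-action to a subgroup of finite index; it is still ergodic --- in fact mixing --- because the exponential decay of correlations proved in Section~\ref{sec:probmeas} holds verbatim after restricting the translation parameter to the sublattice. Hence the $\Z^d$ pointwise ergodic theorem along cubes applies and gives, for each fixed $n$, that the right-hand side tends to $\EE[g_n]$ as $l\to\infty$ for $\PP$-almost every $\uV$, equivalently (translations preserve $\PP$, cf.\ (\ref{eq:transinv})) for $\PP$-almost every $\uU$.

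Combining the two ingredients, for every $n$ and $\PP$-a.e.\ $\uU$ one gets $\EE[g_n]-\delta_n\le\liminf_l g_l(\uU)\le\limsup_l g_l(\uU)\le\EE[g_n]+\delta_n$; sending $n\to\infty$ and using $\delta_n\to0$ together with $\EE[g_n]\to\rho^{(dir)}_{l_0}$ squeezes $g_l(\uU)$ to the constant $\rho^{(dir)}_{l_0}$, which is the first assertion. The periodic case is the same computation with $g_n^{(per)}(\uU):=\inv{|\Lm_n|}N^{(per)}[\Lm_n]$ and Lemma~\ref{lm:idsdiff}(ii) in place of (i) --- applicable since $\Lm_l$ and every $T\Lm_n$ are cubes, at the cost of an irrelevant factor $3k$ --- and produces a constant $\rho^{(per)}_{l_0}$. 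Finally, the rank bound $\rk[D^{(per)}_{\uU,\Lm_n}-D^{(dir)}_{\uU,\Lm_n}]\le k|\dd\Lm_n|$ recorded inside the proof of Lemma~\ref{lm:idsdiff}, fed into Lemma~\ref{lm:eigenvalues}, gives $\inv{|\Lm_n|}\bigl|N^{(dir)}[\Lm_n]-N^{(per)}[\Lm_n]\bigr|\le\delta_n\to0$ for every $\uU$; this is (\ref{eq:pereq1}) and forces $\rho^{(dir)}_{l_0}=\rho^{(per)}_{l_0}=:\rho_{l_0}$.

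The hard part is the middle step. One must recognize that averaging $g_n$ over the tiling translates really is a box average for an ergodic group action: this needs the bookkeeping that rewrites $\Pi_n^l$ as a cube of the sublattice $l_0 2^n\Z^d$ (including the harmless $l$-dependence of $C_l$ and the $l$-independent pre-translation $\uU\mapsto\uV$), and, crucially, it needs more than plain ergodicity of $\PP$ --- a finite-index subgroup of an ergodic $\Z^d$-action need not act ergodically, so one has to invoke the exponential mixing established in Section~\ref{sec:probmeas} to certify ergodicity of the sublattice action. Everything else is the elementary geometry of the boundary-to-volume ratio $\delta_n$ and the squeeze.
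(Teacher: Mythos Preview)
Your argument is correct and close in spirit to the paper's, but the logical organization is genuinely different. The paper runs a three–scale Cauchy argument: for $m>n>n_0$ it tiles both $\Lm_m$ and $\Lm_n$ by translates of $\Lm_{n_0}$, so that after two applications of Lemma~\ref{lm:idsdiff} the difference $|g_m-g_n|$ is controlled by $C\,2^{-n_0}$ plus the difference of two Birkhoff averages of the \emph{same} function $N[\,\cdot\,\Lm_{n_0}]$. For the latter the paper only needs that the averages converge $\PP$-a.s.\ to \emph{something}---no identification of the limit, hence no ergodicity of the sublattice action is required; letting $n_0\to\infty$ then finishes the Cauchy estimate. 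You instead run a two–scale sandwich: bound $|g_l-\text{(tile average at scale }n)|\le\delta_n$, send $l\to\infty$ first, and identify the tile average with $\EE[g_n]$. This is cleaner and yields the deterministic limit $\rho^{(dir)}_{l_0}=\lim_n\EE[g_n]$ at once, but it genuinely needs ergodicity of the finite–index sublattice action $S^j=T^{l_0 2^n j}$, which you correctly note does \emph{not} follow from ergodicity of the full $\Z^d$-action and must be extracted from the exponential decay of correlations in Section~\ref{sec:probmeas}. (Your bookkeeping is right: the tiling translates of $\Lm_n$ inside $\Lm_l$ are the centered cubes $C_l=\{-2^{l-n-1},\dots,2^{l-n-1}-1\}^d$ in the sublattice after the fixed coset shift $\uU\mapsto T^{l_0 2^{n-1}(1,\dots,1)}\uU$, so the $l$-independence you assert holds.) The treatment of the periodic case and of (\ref{eq:pereq1}) via the rank bound and Lemma~\ref{lm:eigenvalues} is the same in both proofs.
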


\begin{proof}
We show that $\big(\inv{|\Lm_n|} N^{(dir)}[\Lm_n] \big)_{n\in\NN}$ is a Cauchy sequence and in the proof we denote $N^{(dir)}[\Lm_n]=:N[\Lm_n]$.
The proof for periodic boundary conditions is completely analogous.

Assume that $m > n$. 
By applying (\ref{def:split2}), one can split $\Lm_m$ into $2^{d(m-n)}$ cubes of size $|\Lm_n|$,
\begin{equation}
	\Lm_{m} = \bigcup_{T \in \Pi^m_n} T \Lm_n.
\end{equation} 
The mean integrated density of states for these translations of $\Lm_n$ is
\begin{equation}
	\inv{ 2^{d(m-n)} } \sum_{T \in \Pi^m_n} \inv{|\Lm_n|}  N[T \Lm_n]
	\	=\ \inv{|\Lm_m|} \sum_{T \in \Pi^m_n}  N[T \Lm_n].
\end{equation}
Thus we can estimate  
\begin{eqnarray}\label{eq:imp1}
\lefteqn{
	\left| \inv{|\Lm_m|}  N[\Lm_m]  - \inv{|\Lm_n|}  N[\Lm_n]   \right|} \\\nonumber
	&\leq &
	\inv{|\Lm_m|}	\left|	N[\Lm_m] - \sum_{T \in \Pi^m_n} N[T \Lm_n] 	\right| 
	+ 	\left|	\inv{|\Lm_m|} \sum_{T \in \Pi^m_n}  N[T \Lm_n] - \inv{|\Lm_n|}  N[\Lm_n] 	\right|	.
\end{eqnarray}
Lemma~\ref{lm:idsdiff} directly gives us an upper bound for the first term on the right side of (\ref{eq:imp1}), since we have
\begin{eqnarray}\label{eq:imp2}
	\lefteqn{
	\inv{|\Lm_m|} \left|	N[\Lm_m] - \sum_{T \in \Pi^m_n} N[T \Lm_n] \right|
	\ \leq\ 	k\ \frac{ 2^{d(m-n)} |\dd \Lm_n|}{|\Lm_m|} }\nonumber \\
	&\leq&  
	k\ 2^{d(m-n)}\ \frac{  2 d (l_0 2^n)^{d-1}}{(l_0 2^m)^d}
	\ =\  \frac{2dk}{l_0}\ 2^{-n},
\end{eqnarray}
independently of the gauge field configuration.

The second term on the right side of (\ref{eq:imp1}) is the difference of the integrated density of states for a cube $\Lm_n$ and its spatial mean over $2^{d(m-n)}$ translated disjoint cubes of the same size. 
As we do not know, yet, whether this term is small with high probability, provided $n$ is large enough,
we split $\Lm_n$ and its translates into smaller cubes of size $|\Lm_{n_0}|$ for some $n_0<n \in \NN$, as indicated in Figure~\ref{fig:lnlm2}.

\begin{figure}
	\begin{center}
		\includegraphics[width=0.50\textwidth]{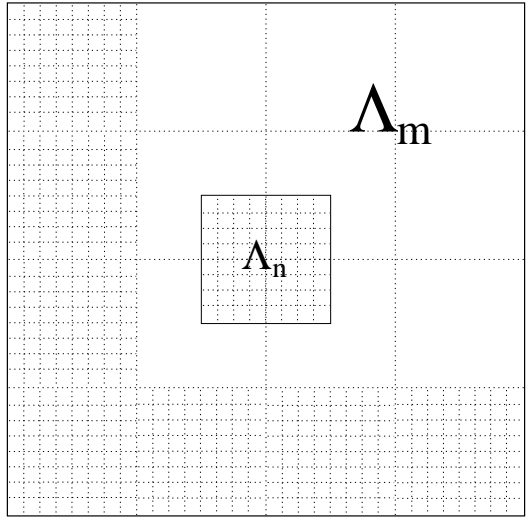}
	\end{center}
	\caption{Both sets $\Lm_n$ and $\Lm_m$ are split up in smaller cubes of the same size as $\Lm_{n_0}$. For $\Lm_m$ only part of the splitting is sketched.}
	\label{fig:lnlm2}
\end{figure}

We estimate
\begin{eqnarray}\label{eq:imp3}
	\lefteqn{
	\left|
	\inv{|\Lm_m|} \sum_{T \in \Pi^m_n} N[T \Lm_n] 
	- \inv{|\Lm_n|} N[\Lm_n]
	\right|	} \nonumber \\
 &\leq	 &
  \inv{|\Lm_m|} \sum_{T \in \Pi^m_n} 	
  \left|
  N[T\ \Lm_n] - 
  \sum_{T' \in \Pi^n_{n_0}} N[T' T \Lm_{n_0}]
	\right|	\nonumber \\
 &&+ 
 	\left|
	\inv{|\Lm_m|} \sum_{T \in \Pi^m_{n_0}} N[T \Lm_{n_0}]
	-	\inv{|\Lm_n|} \sum_{T \in \Pi^n_{n_0}} N[T \Lm_{n_0}]
	\right|	\nonumber \\
 && {} + 
 	\inv{|\Lm_n|}
	\left|
  \sum_{T \in \Pi^n_{n_0}} N[T \Lm_{n_0}]
  - N[\Lm_{n}]
	\right|	,
\end{eqnarray}
using that any $T \in \Pi^m_{n_0}$ is given as a product $T=T' T''$, for unique $T' \in \Pi^m_n$ and $T'' \in \Pi^n_{n_0}$.
Then Lemma~\ref{lm:idsdiff} yields again an upper bound for the first and the third term on the right side of (\ref{eq:imp3}), and analogously to (\ref{eq:imp2}), we obtain that
\begin{equation}\label{eq:imp4}
	\inv{|\Lm_m|} \sum_{T \in \Pi^m_n} 	\left| N[T \Lm_n] - \sum_{T' \in \Pi^n_{n_0}} N[T' T \Lm_{n_0}] \right|\ 
	\leq\ \frac{2dk}{l_0}\ 2^{-n_0}
\end{equation}
and
\begin{equation}\label{eq:imp4a}
	\inv{|\Lm_n|}	\left| \sum_{T \in \Pi^n_{n_0}} N[T \Lm_{n_0}] - N[\Lm_{n}]	\right|\
	\leq\ \frac{2dk}{l_0}\ 2^{-n_0}. 
\end{equation} 
Thus Equations \eqref{eq:imp2}, \eqref{eq:imp3}, \eqref{eq:imp4}, and \eqref{eq:imp4a} yield
\begin{eqnarray}\label{eq:imp5}
\lefteqn{
	\left|
	\inv{|\Lm_m|} N[\Lm_m] - \inv{|\Lm_n|} N[\Lm_n]
	\right|} \\\nonumber
& \leq &
	\frac{4dk}{l_0} (2^{-n}	+	2^{-n_0})
	+ \bigg|
	\inv{|\Lm_m|} \sum_{T \in \Pi^m_{n_0}} N[T \Lm_{n_0}]
 	-	\inv{|\Lm_n|} \sum_{T \in \Pi^n_{n_0}} N[T \Lm_{n_0}]	
 	\bigg|	.
\end{eqnarray}
We can choose $n_0$ and then $n>n_0$  so large that $\frac{4dk}{l_0} (2^{-n}	+	2^{-n_0})$ is arbitrarily small.
To estimate the remaining term, we view $\{ Z_x(\uU) \}_{x\in\Zd}$, with 
\begin{equation}
	Z_x(\uU):=N[T^{2^{n_0} l_0 x } \Lm_{n_0}]=N^{(dir)}_{T^{2^{n_0} l_0 x}\Lm_{n_0},\uU}(E),
\end{equation}
to be an invariant family of random variables. By Birkhoff's Ergodic Theorem, the mean of these random variables converges $\PP$-almost surely. Hence,
\begin{equation}
	\bigg( \inv{|\Lm_m|} \sum_{T \in \Pi^m_{n_0}} N[T \Lm_{n_0}] \bigg)_{m=n_0+1}^\infty
\end{equation}
is a Cauchy sequence, $\PP$-almost surely.

As noted above, we can replace $N[\#]$ by $N^{(per)}[\#]$ and repeat the proof for periodic boundary conditions with exactly the same arguments, since all sets are cubes and Lemma~\ref{lm:idsdiff}(ii) applies.

Equation~(\ref{eq:pereq1}) is similarly proven as Lemma~\ref{lm:idsdiff}(ii), 
\begin{equation}
	\inv{|\Lm_n|} \left| N^{(dir)}[\Lm_n] - N^{(per)}[\Lm_n] \right| 
	\ \leq \ 
	\frac{k |\dd\Lm_n|}{|\Lm_n|} \ \to\ 0,\quad n\to\infty\ .
\end{equation}
\end{proof}

Note that, while the preceding lemma holds for all $l_0\in\NN_0$, this does not imply the independence of the integrated density of states of the choice of $l_0$. It turns out, however, that not only the independence holds true, but that furthermore the size of the cubes in the sequence is immaterial, as long as it is monotonically growing.

%%%%%%%%%%%%%%%%%%%%%%%%%%%%%%%%%%%%%%%%%%%%%%%%%%%%%%
\subsection{Proof of Main Theorem~\ref{th:mainids}}%%%%%%%%
%%%%%%%%%%%%%%%%%%%%%%%%%%%%%%%%%%%%%%%%%%%%%%%%%%%%%%

The proof is similar to the one of Lemma~\ref{lm:aux1}.
We choose $l_0, n_0 \in \NN$ arbitrary, but fixed. Given $\Om_k$, there is an $m\in\NN$ such that $\Om_k \subseteq \Lm_m$. We define
\begin{eqnarray}	
	\Sigma_k &:=& \big\{ T \in \Pi^m_{n_0} \ |\ T\Lm_{n_0}\subseteq \Om_k  \big\}, \nonumber\\
	\tOm_k   &:=& \bigcup_{T\in\Sigma_k} T \Lm_{n_0} \ \subseteq\ \Om_k
\end{eqnarray}

Note that $\tOm_k$ is a rectangular box, whose smallest side length is at most two times smaller than its largest side length and all side lengths are multiples of $l_0 2^{n_0}$. Moreover
\begin{equation}
	|\Om_k|-|\tOm_k| = |\Om_k\wo\tOm_k| \leq |\dd\Om_k| \cdot |\Lm_{n_0}|
\end{equation}

Now, we estimate for $k>n_0$, $\Om_k \supseteq \Lm_{n_0}$
\begin{eqnarray}\label{eq:imp6b}
	\lefteqn{	\left| \inv{|\Lm_k|} N[\Lm_k] - \inv{|\Om_k|} N[\Om_k] \right|
	\ \leq\ 
	\left| \inv{|\Lm_k|} N[\Lm_k] -  \inv{|\Lm_k|} \sum_{T \in \Pi^k_{n_0}} N[T \Lm_{n_0}] \right| 
	}	\nonumber\\ && {}	+
	\left| \inv{|\Lm_k|} \sum_{T \in \Pi^k_{n_0}} N[T \Lm_{n_0}] - \inv{|\tOm_k|} \sum_{T \in \Sigma_k} N[T \Lm_{n_0}] \right| 
	\nonumber\\	&& {}	+ 
	\left| \inv{|\tOm_k|} \sum_{T \in \Sigma_k} N[T \Lm_{n_0}] -  \inv{|\tOm_k|} N[\tOm_k]  \right|
	\nonumber\\&& {}	+ 	
	\left|  \inv{|\tOm_k|} N[\tOm_k] - \inv{|\Om_k|}  N[\Om_k]\right|.
\end{eqnarray}
In the first and third term we can apply Lemma~\ref{lm:idsdiff} directly to get upper bounds. The second term converges $\PP$-almost surely to zero, by Birkhoff's ergodic theorem.
For the last term we estimate the difference of the integrated density of states of $\tOm_k$ and $\Om_k$,
\begin{equation}\label{eq:diffids}
	\left| \inv{|\tOm_k|} N[\tOm_k] - \inv{|\Om_k|} N[\Om_k] \right|
	=\frac{\left| |\tOm_k|\cdot N[\Om_k] - |\Om_k| \cdot N[\tOm_k] \right|}{ |\tOm_k| \cdot |\Om_k| } .
\end{equation}
First, we estimate
\begin{eqnarray}	\label{eq:est2}
\lefteqn{ \bigg| |\Om_k| \cdot N[\tOm_k] - |\tOm_k| \cdot N[\Om_k] \bigg|} \nonumber\\
		&=& \bigg| |\Om_k| \big( N[\tOm_k] - N[\Om_k]\big) + \big(|\Om_k| - |\tOm_k|\big) N[\Om_k] \bigg| \nonumber\\
	  &\leq& |\Om_k| \big|N[\Om_k] - N[\tOm_k]\big| + \big(|\Om_k|-|\tOm_k|\big) N[\Om_k].
\end{eqnarray}
Then we observe that
\begin{equation}\label{eq:diff2}
	|N[\Om_k]-N[\tOm_k]|
	\ \leq\ 
	k|\dd\tOm_k| + (2d+1)k(|\Om_k|-|\tOm_k|)
\end{equation}
holds true because of the following estimate using Lemma~\ref{lm:idsdiff}
\begin{eqnarray}
\lefteqn{ \left| N[\Om_k] - N[\tOm_k] \right| - \left| \sum_{x\in\Om_k\wo\tOm_k} N[\{x\}] \right|} \nonumber\\
	& \leq & \left| N[\Om_k] - \bigg( N[\tOm_k] + \sum_{x\in\Om_k\wo\tOm_k} N[\{x\}] \bigg) \right| \nonumber\\
	& \leq & k\Big( |\dd\tOm_k| + 2d(|\Om_k|-|\tOm_k|) \Big).
\end{eqnarray}
Thus we get
\begin{eqnarray}
\lefteqn{\left| \inv{|\tOm_k|} N[\tOm_k] - \inv{|\Om_k|} N[\Om_k] \right|} \nonumber\\
	&\leq & \inv{|\tOm_k|} |N[\Om_k] - N[\tOm_k]| + \inv{|\Om_k|}N[\Om_k]\frac{(|\Om_k|-|\tOm_k|)}{|\tOm_k|} \nonumber\\
	&\leq & \frac{k|\dd\tOm_k|}{|\tOm_k|} 
	+ \frac{(2d+2)k(|\Om_k|-|\tOm_k|)}{|\tOm_k|}
	\ \stackrel{n\to\infty}{\longrightarrow}\ 0, 
\end{eqnarray}
by using Equations \eqref{eq:est2}, \eqref{eq:diff2} and the fact that $\inv{|\Om_k|}N[\Om_k]\leq k$.
Altogether we have proven
\begin{equation}\label{eq:result}
	\lim_{k\to\infty} 	\left| \inv{|\Lm_k|} N[\Lm_k] - \inv{|\Om_k|} N[\Om_k] \right| = 0,
\end{equation}
$\PP$-almost surely.
The periodic case is again proven completely analogously.

Recall that Lemma~\ref{lm:aux1} gives the  existence of the limit 
\begin{equation}
	\rho_\uU (E) := \lim_{n \to \infty}\frac{1}{|\Lm_n|} N^{(dir)}_{\Lm_n,\uU}(E)
	=\lim_{n \to \infty}\frac{1}{|\Lm_n|} N^{(per)}_{\Lm_n,\uU}(E)
\end{equation}
with ${(\Lm_n)}_{n \in \NN}$ as in \eqref{def:seq}, $\PP$-almost surely.
Equation \eqref{eq:result} ensures the independence of this limit of the chosen sequence $(\Om_k)_{k\in\NN}$.
Furthermore, Lemma~\ref{lm:aux1} implies that the limit for periodic boundary conditions is the same.

Since the integrated density of states is invariant under translations, it is $\PP$-almost surely constant, and \textit{(ii)} follows.
This finishes the proof of Theorem \ref{th:mainids}. \hfill \qed

\section*{Acknowledgements}
C.K. gratefully acknowledges generous support by the Graduate School of Excellence \textquoteleft Elementare Kr\"afte und mathematische Grundlagen EMG\textquoteright. The authors thank H.~Wittig and T.~Wettig for drawing their attention to the link between lattice QCD and RMT.
The authors are further indebted to H.~Wittig for various suggestions that considerably improved this paper.

\bibliography{bibliography}{}
\bibliographystyle{plain}

\end{document}